\DeclareSymbolFont{operators}{OT1}{cmr}{m}{n}
\DeclareSymbolFont{letters}{OML}{cmm}{m}{it}
\DeclareSymbolFont{symbols}{OMS}{cmsy}{m}{n}
\DeclareSymbolFont{largesymbols}{OMX}{cmex}{m}{n}
\newtheorem{proposition}{Proposition}[section]
\newtheorem{definition}{Definition}[section]
\newtheorem{corollary}{Corollary}[section]
\theoremstyle{remark}
\newcommand{\lie}{\ensuremath{\mathsterling}}
\newcommand{\extd}{\ensuremath{\mathbf{d}}}
\newcommand{\iprod}{\ensuremath{\boldsymbol{\iota}}}
\newcommand{\R}{\mathbb{R}}
\newcommand{\rev}[1]{\textcolor{black}{#1}}
\begin{document}

\title{Energy and momentum conservation in the Euler-Poincar\'e formulation of local Vlasov-Maxwell-type systems}

\author{Eero Hirvijoki}
\affiliation{Department of Applied Physics, Aalto University, P.O. Box 11100, 00076 AALTO, Finland}
\email{eero.hirvijoki@gmail.com}

\author{Joshua W. Burby}
\affiliation{Los Alamos National Laboratory, Los Alamos, New Mexico 87545,
USA}

\author{David Pfefferl\'e}
\affiliation{The University of Western Australia, 35 Stirling Highway, Crawley WA 6009, Australia}

\author{Alain J. Brizard}
\affiliation{Department of Physics, Saint Michael's College, Colchester, VT 05439, USA}
\date{\today}

\begin{abstract}
    The action principle by Low [Proc. R. Soc. Lond. A 248, 282--287] for the classic Vlasov-Maxwell system contains a mix of Eulerian and Lagrangian variables. This renders the Noether analysis of reparametrization symmetries inconvenient, especially since the well-known energy- and momentum-conservation laws for the system are expressed in terms of Eulerian variables only. While an Euler-Poincar\'e formulation of Vlasov-Maxwell-type systems, effectively starting with Low's action and using constrained variations for the Eulerian description of particle motion, has been known for a while [J. Math. Phys., 39, 6, pp. 3138-3157], it is hard to come by a documented derivation of the related energy- and momentum-conservation laws in the spirit of the Euler-Poincar\'e machinery. To our knowledge only one such derivation exists in the literature so far, dealing with the so-called guiding-center Vlasov-Darwin system [Phys. Plasmas 25, 102506]. The present exposition discusses a generic class of local Vlasov-Maxwell-type systems, with a conscious choice of adopting the language of differential geometry to exploit the Euler-Poincar\'e framework to its full extent. After reviewing the transition from a Lagrangian picture to an Eulerian one, we demonstrate how symmetries generated by isometries in space lead to conservation laws for linear- and angular-momentum density and how symmetry by time translation produces a conservation law for energy density. We also discuss what happens if no symmetries exist. Finally, two explicit examples will be given -- the classic Vlasov-Maxwell and the drift-kinetic Vlasov-Maxwell -- and the results expressed in the language of regular vector calculus for familiarity.
\end{abstract}

\maketitle

\section{Introduction}
Recall that the Vlasov-Maxwell system couples an advection equation for particle phase-space number density $F(\bm{x},\bm{v},t)d^3\bm{x}d^3\bm{v}$ to Maxwell's equations for the electromagnetic fields in a self-consistent manner: the current and charge densities in Maxwell's equations are computed as velocity-space moments of the particle distribution function, according to $\varrho=e\int_{\bm{v}}F d^3\bm{v}$ and $\bm{j}=e\int_{\bm{v}}\bm{v}F d^3\bm{v}$ \rev{(the summation over species is implicitly assumed and henceforth omitted)}, and the Lorentz force responsible for the particle trajectories depends on the fields $\bm{E}$ and $\bm{B}$. The set of equations, governing the dynamics and constraints of the system, becomes
\begin{subequations}
\label{eq:VM}
\begin{align}
    \partial_tF+\nabla\cdot(\bm{v}F)+\partial_{\bm{v}}\cdot\left((e/m)(\bm{E}+\bm{v}\times\bm{B})F\right)&=0,\\
    \varepsilon_0\partial_t\bm{E}+\bm{j}-\mu_0^{-1}\nabla\times\bm{B}&=0,\\
    \partial_t\bm{B}+\nabla\times\bm{E}&=0,\\
    \varepsilon_0\nabla\cdot\bm{E}-\varrho&=0,\\
    \nabla\cdot\bm{B}&=0.
\end{align}
\end{subequations}

Conservation laws for this system are straightforward to identify directly from the equations of motion, with a bit of intuition. Multiplying the advection equation for $F$ with $m\bm{v}$ and $\tfrac{1}{2}m|\bm{v}|^2$, and integrating over the velocity space, one finds
\begin{align}
    &\partial_t\int m\bm{v}Fd^3\bm{v}+\nabla\cdot\int m\bm{v}\bm{v}Fd^3\bm{v}=\varrho\bm{E}+\bm{j}\times\bm{B},\\
    &\partial_t\int \frac{1}{2}m|\bm{v}|^2Fd^3\bm{v}+\nabla\cdot\int \frac{1}{2}m|\bm{v}|^2\bm{v}Fd^3\bm{v}=\bm{j}\cdot\bm{E}.
\end{align}
On the other hand, an educated guess and Maxwell's equations demonstrate that
\begin{align}
    &\bm{E}\cdot\bm{j}+\frac{1}{2}\partial_t\big(\varepsilon_0|\bm{E}|^2+\mu_0^{-1}|\bm{B}|^2\big)=-\nabla\cdot(\mu_0^{-1}\bm{E}\times\bm{B}),\\
    &\varrho\bm{E}+\bm{j}\times\bm{B}+\partial_t(\varepsilon_0\bm{E}\times\bm{B})=-\nabla\cdot\Big(\frac{1}{2}(\varepsilon_0|\bm{E}|^2+\mu_0^{-1}|\bm{B}|^2)\mathbf{1}-\mu_0^{-1}\bm{B}\bm{B}-\varepsilon_0\bm{E}\bm{E}\Big),
\end{align}
where $\mathbf{1}$ is the identity dyad. When the expressions above are combined, local conservation laws for linear momentum density and energy density are obtained
\begin{align}
    \partial_t\left(\int m\bm{v}Fd\bm{v}+\varepsilon_0\mu_0\bm{S}\right)+\nabla\cdot\left(\int m\bm{v}\bm{v}Fd\bm{v}-\bm{\mathcal{E}}\right)&=0,\\
    \partial_t\left(\int\frac{1}{2}m|\bm{v}|^2Fd\bm{v}-\text{Tr}(\bm{\mathcal{E}})\right)+\nabla\cdot\left(\int\frac{1}{2}m|\bm{v}|^2\bm{v}Fd\bm{v}+\bm{S}\right)&=0,
\end{align}
where $\text{Tr}(\cdot)$ is the trace and the Maxwell stress tensor $\bm{\mathcal{E}}$ and the Poynting vector $\bm{S}$ are
\begin{align}
    \bm{\mathcal{E}}&=-\frac{1}{2}\left(\varepsilon_0|\bm{E}|^2+\mu_0^{-1}|\bm{B}|^2\right)\mathbf{1}+\varepsilon_0\bm{E}\bm{E}+\mu_0^{-1}\bm{B}\bm{B},\\
    \bm{S}&=\mu_{0}^{-1}\bm{E}\times\bm{B}.
\end{align}
Conservation of angular momentum with respect to a given axis follows immediately from the symmetry of $\bm{v}\bm{v}$ and $\mathcal{E}$\footnote{\rev{In right-handed cylindrical coordinates $(r,\varphi,z)$, let $\nabla z$ be axis of rotation and $\varphi$ the angle of rotation. Then for any symmetric second order tensor $\mathbf{T}$ we have $(\nabla\cdot\mathbf{T})\cdot\partial_{\varphi}\bm{x}=\nabla\cdot(\mathbf{T}\cdot\partial_{\varphi}\bm{x})-\mathbf{T}:\nabla\partial_{\varphi}\bm{x}=\nabla\cdot(\mathbf{T}\cdot\partial_{\varphi}\bm{x})$ because $\nabla\partial_{\varphi}\bm{x}=r(\nabla r\nabla\varphi-\nabla\varphi\nabla r)$ is antisymmetric, and angular momentum conservation follows.}}.

While the results above were easy to come by, it is preferable to obtain them directly from a variational principle using Noether's theorem. This systematic strategy is especially useful when dealing with alternate Vlasov-Maxwell-type systems where the particle motion couples to electromagnetic fields in a far more complicated way, blurring the intuition for making an educated guess. At least four such Vlasov-Maxwell systems exist and can be used in numerical modeling of plasmas in various branches of science. These are the guiding-center \cite{Brizard-Tronci:2016}, the drift-kinetic \cite{Burby-initial-value-problem:2016JPlPh,Burby-Brizard:2019PhLA}, the gyrokinetic \cite{Burby-et-al:2015PhLA,Burby-Brizard:2019PhLA}, and the spin-Vlasov-Maxwell system \cite{Burby-finite-dimensional:2017PhPl}. They all have a structure similar to equations \eqref{eq:VM}. 

Over the years, several papers discussing action principles for the Vlasov-Maxwell system or related ones\footnote{By systems related to Vlasov-Maxwell models, we mean 1) genuine Vlasov-Maxwell models that form an infinite-dimensional initial-value problem for the dynamical variables, and 2) the so-called Vlasov-Poisson-Amp\`ere models which provide an initial value problem for the distribution function only and constraint equations for the electromagnetic potentials.} have been presented \cite{Low:1958,galloway_kim:1971,Pfirsch:1984ZNatA,Pfirsch_Morrison:1985PhRvA,Elvsen_Larsson:1993PhyS,Larsson:1993JPlPh,Fla:1994,Cendra_et_al:1998,Sugama:2000PhPl,Brizard:2000PRL,Brizard:2000PhPl,Sugama-et-al:2013PhPl,Sugama-et-al:2015PhPl,Burby-et-al:2015PhLA,Burby-initial-value-problem:2016JPlPh,Brizard-Tronci:2016,Burby-Sengupta:2018PhPl,Sugama_et_al_2018PhPl,Burby-Brizard:2019PhLA} and many of them~\cite{Pfirsch:1984ZNatA,Pfirsch_Morrison:1985PhRvA,Sugama:2000PhPl,Brizard:2000PRL,Brizard:2000PhPl,Sugama-et-al:2013PhPl,Sugama-et-al:2015PhPl,Brizard-Tronci:2016,Sugama_et_al_2018PhPl} discuss the local energy and momentum conservation laws. Nevertheless, to our knowledge the only documented work dealing with the conservation laws that has been carried out in the spirit of Euler-Poincar\'e formalism is the recent paper by Sugama et al. focusing on the guiding-center Vlasov-Darwin model \cite{Sugama_et_al_2018PhPl}. To continue filling the information vacuum, the present paper discusses a generic class of local Vlasov-Maxwell-type systems, with a conscious choice of adopting the language of differential geometry to exploit the Euler-Poincar\'e framework to its full extent. The reason we focus on genuine Vlasov-Maxwell type systems is their invariance under electromagnetic gauge transformations. This property together with compatible discretization schemes has opened new avenues in numerical plasma simulations (see, e.g., \cite{GEMPIC-review:2018PlST} and references therein). 

We will start from a modification of Low's action principle for Vlasov-Maxwell-type systems and, after reviewing the transition from a Lagrangian picture to an Eulerian one, we demonstrate how space-time-isometry symmetries in the action functional lead to conservation laws for linear- and  angular-momentum  density and for energy density. We will also discuss what happens if no such symmetry with respect to an isometry exists. Once this process is finished, we hope to have demonstrated how powerful the Euler-Poincar\'e framework can be in the context of kinetic plasma theories and how elegantly its geometric exposition suits the study of space-time symmetries.

Finally, two explicit examples will be given---the classic Vlasov-Maxwell \rev{in an axially symmetric background magnetic field} and the drift-kinetic Vlasov-Maxwell that is obtainable as the long-wave-length limit of the non-local gyrokinetic theory \cite{Burby-Brizard:2019PhLA}---and the results expressed in the language of regular vector calculus for familiarity. The reason for focusing on these two systems is because of their robustness, fidelity and efficiency in kinetic simulations of magnetized plasmas. Combining a full Larmor model of ions and a drift-center description of electrons avoids many complications due to the non-local nature of gyrokinetic theories and, at the same time, eliminates the electron-cyclotron-frequency time scale. This combination has been made possible thanks to recently developed electromagnetically gauge-invariant gyrokinetic theory~\cite{Burby-Brizard:2019PhLA}.

The derivation of the guiding-center Vlasov-Maxwell model is a straightforward application of our general procedure and is hence omitted.

\section{Euler-Poincar\'e formulation of the action principle}
We start with a slightly modified version of Low's action principle \cite{Low:1958}. The purpose of the modification is to introduce the capability to handle a wider class of Vlasov-Maxwell-type systems, such as the classic full-particle and the drift-kinetic Vlasov-Maxwell systems. In what follows, all dynamical variables (time-dependent) are denoted by the subscript $t$ to clearly separate them from parameters and/or integration labels. 

\subsection{Action in a mixed-variable representation}
In the action principle, the single-particle phase-space Lagrangian is first multiplied by the phase-space density of fixed-value particle labels, then integrated over all of the particle's phase-space and a given time interval, and finally combined with the standard electromagnetic action to account for electromagnetic interactions in a self-consistent way. In such systems, the electromagnetic fields are treated as Eulerian variables and the role of the single-particle action is to carry (advect) the fixed-value phase-space-density labels along the phase-space flow of individual particles. 

The action is a functional of the particle's phase-space trajectory $\bm{z}_t$, the vector potential $\bm{A}_t$, the scalar potential $\phi_t$, and depends parametrically on the fixed-value density $F$. Written in a general form, we have
\begin{align}\label{eq:Low-action}
    S_F[\bm{z}_t,\bm{A}_t,\phi_t]=&\int\limits_{t_1}^{t_2}\int\limits_{\mathbb{R}^3}\int\limits_{\mathbb{R}^3}\Big(\vartheta_\alpha(\bm{z}_t(\bm{z}))\partial_tz_t^\alpha(\bm{z})-K(\bm{z}_t(\bm{z}),\bm{E}_t(\bm{x}_t(\bm{z})),\bm{B}_t(\bm{x}_t(\bm{z})))\Big)F(\bm{z})d^6\bm{z}dt\nonumber
    \\&+\int\limits_{t_1}^{t_2}\int\limits_{\mathbb{R}^3}\int\limits_{\mathbb{R}^3}\Big(e\bm{A}_{t}(\bm{x}_t(\bm{z}))\cdot\partial_t\bm{x}_t(\bm{z})-e\phi_t(\bm{x}_t(\bm{z}))\Big)F(\bm{z})d^6\bm{z}dt\nonumber
    \\&+\int\limits_{t_1}^{t_2}\int\limits_{\mathbb{R}^3}\frac{1}{2}(\varepsilon_0|\bm{E}_t(\bm{x})|^2-\mu_0^{-1}|\bm{B}_{\text{ext}}(\bm{x})+\bm{B}_t(\bm{x})|^2)d^3\bm{x}dt.
\end{align}
Here $\bm{z}=\{z^{\alpha}\}_{\alpha=1}^6=(\bm{x},\bm{v})=(\{x^i\}_{i=1}^3,\{v^i\}_{i=1}^3)$ are integration labels in the phase-space, $\bm{z}_t=\{z_t^{\alpha}\}_{\alpha=1}^6=(\bm{x}_t,\bm{v}_t)=(\{x_t^i\}_{i=1}^3,\{v_t^i\}_{i=1}^3)$ are the time-dependent phase-space coordinates of a single particle with $\partial_t\bm{z}_t=(\partial_t\bm{x}_t,\partial_t\bm{v}_t)$ as time derivatives (Eulerian phase-space velocities), and $\bm{z}_t(\bm{z})=(\bm{x}_t(\bm{z}),\bm{v}_t(\bm{z}))$ refers to the coordinates the particle would reach in time $t$ when starting from an initial point $\bm{z}$. The notation $F(\bm{z})d^6\bm{z}=F(\bm{z})d^3\bm{x}d^3\bm{v}$ denotes the phase-space density of the fixed-value labels, with the bare volume elements being $d^3\bm{x}=dx^1dx^2dx^3$ and $d^3\bm{v}=dv^1dv^2dv^3$. The dynamical electric and magnetic fields are derived from the potentials via the standard relations $\bm{E}_t=-\partial_t{\bm{A}}_t-\nabla\phi_t$. The external magnetic field emanates from an external, static vector potential $\bm{B}_{\text{ext}}=\nabla\times\bm{A}_{\text{ext}}$ with no external electric field present. The dot $\cdot$ refers to the Euclidean inner product of vectors in~$\R^3$.

The original phase-space formulation of the classic Vlasov-Maxwell system would be recovered by setting $\bm{A}_{\text{ext}}=0$, and choosing the functions $\vartheta_{\alpha}$ and $K$ so that $\vartheta_{\alpha}(\bm{z}_t(\bm{z}))\partial_tz^{\alpha}_t(\bm{z})=m\bm{v}_t(\bm{z})\cdot\partial_t\bm{x}_t(\bm{z})$ and $K(\bm{z}_t(\bm{z}))=\tfrac{1}{2}m|\bm{v}_t(\bm{z})|^2$. One can then interpret the first row of \eqref{eq:Low-action} to represent the free-particle action, the second row the coupling term to the electromagnetic fields, and the last row the electromagnetic action in a vacuum. Our modifications effectively affect only the "free-particle" action, where we allow the kinetic energy to depend locally on the dynamic electric and magnetic fields and the functions $\vartheta_{\alpha}$ to possibly depend on the whole phase-space, in anticipation of how the velocity vector $\bm{v}_t$ in guiding-center dynamics is defined with respect to a fairly unique choice of the coordinates $(v^1,v^2,v^3)$. 

One could apply Hamilton's principle directly to \eqref{eq:Low-action} and derive the related Euler-Lagrange conditions for the trio $(\bm{z}_t,\bm{A}_t,\phi_t)$. This approach will not yield the Vlasov equation directly though, as the source terms appearing in the Maxwell's equations involves integration of the fixed-value density $F$ over the initial phase-space coordinates. In this picture, a Noether-type analysis of symmetries rapidly becomes intricate via the space-time reparametrization of trajectories and fields. It is thus helpful to convert the action above and apply Hamilton's principle and Noether's theorem directly in terms of Eulerian variables.  

\subsection{Conversion to Eulerian variables}
The process is initiated by identifying different coordinate functions that appear in \eqref{eq:Low-action} with their differential-geometric counterparts. We list these elements and their interpretation as follows:
\begin{enumerate}
    \item The phase-space integration domain, namely the open set $\{(\bm{x},\bm{v})|\bm{x}\in\mathbb{R}^3,\bm{v}\in\mathbb{R}^3\}$, is identified as the tangent bundle $TQ=\bigcup\{(x,v_x)|x\in Q,v_x\in T_xQ\}$ of the manifold $Q=\mathbb{R}^3$. Unbolded symbols will denote representative elements, e.g. a point $x\in Q$, a tangent vector $v_x\in T_xQ$ at point $x\in Q$, and a generic point $z\in TQ$ on the tangent bundle.

    \item The time-dependent functions $z_t^\alpha$, representing a single-particle phase-space trajectory in $\R^6$, are interpreted as the local coordinates of a time-dependent diffeomorphisms $g_t\in\text{Diff}(TQ)$, namely a family of smooth maps $g_t:TQ\to TQ$ with smooth inverse such that $g_t^{-1}(g_t(z))=g_t(g_t^{-1}(z))=z$ for all $t\in(t_1,t_2)\subset\mathbb{R}$ and for all $z\in TQ$. For a fixed point $z \in TQ$, the time derivative of the diffeomorphism generates a tangent vector $\partial_tg_t(z) \in T_{g_t(z)}TQ$. We then construct the \emph{Eulerian velocity field} $\xi_t=\partial_t g_t\circ g_t^{-1}\in\mathfrak{X}(TQ)$ such that $\partial_t g_t(z) = {\xi_t}(g_t(z)) = \xi_t(z_t)$.  If this vector field has a coordinate representation $\xi_t=\xi_t^{\alpha}(\bm{z})\partial/\partial z^{\alpha}$, then $\partial_tz^{\alpha}_t(\bm{z})=\xi_t^{\alpha}(\bm{z}_t(\bm{z}))$.
    
    \item The scalar potential $\phi_t$ and the vector potential $\bm{A}_t(\bm{x})=A_{t,i}(\bm{x})\bm{e}^i(\bm{x})$ (written in so-called covariant components) are identified respectively as a time-dependent zero-form $\phi_t\in\Omega^0(Q)$ and as a time-dependent one-form $A_t\in\Omega^1(Q)$, locally expressed as ${A_t}_x=A_{t,i}(\bm{x})\extd x^i\in T^{\ast}_xQ$. The related time-dependent electric-field one-form $E_t=-\partial_tA_t-\extd \phi_t\in\Omega^1(Q)$ and time-dependent magnetic-field two-form $B_t=\extd A_t\in\Omega^2(Q)$ are also introduced. 
    
    \item The canonical projection map $\pi:TQ\rightarrow Q$, $(x,v)\mapsto \pi(x,v) = x$ is used to promote the electromagnetic potentials to differential forms on the tangent bundle, namely $\pi^*\phi_t = \phi_t \circ \pi \in \Omega^0(TQ)$ and $\pi^* A_t \in \Omega^1(TQ)$. This permits the identification $\phi_t(\bm{x}_t(z)) = \pi^*\phi_t(g_t(z)) = g_t^*\pi^*\phi_t(z)$ as a function on $TQ$. Using the tangent map of the canonical projection $T\pi:TTQ\to TQ$ such that $T\pi_{g_t(z)}(\partial_t g_t(z)) = \partial_t x_t(z) \in T_{\pi(g_t(z))}Q$, we identify $\bm{A}_t(\bm{x}_t(z))\cdot \partial_t\bm{x}_t(z) = {A_t}_{\pi(g_t(z))} ( T\pi_{g_t(z)}(\partial_t g_t(z))) = {\pi^*A_t}_{g_t(z)}(\partial_t g_t(z)) = \iprod_{\xi_t}\pi^*A_t(g_t(z))=g_t^*(\iprod_{\xi_t}\pi^*A_t)(z) = g_t^*(\iprod_{\partial_t g_t\circ g_t^{-1}}\pi^*A_t)(z)$ as a function on $TQ$.

    \item The fixed phase-space volume form $f\in\Omega^6(TQ)$ is introduced and, in local coordinates, has the expression $f_z=F(\bm{z})\extd x^1\wedge\extd x^2\wedge\extd x^3\wedge \extd v^1\wedge \extd v^2\wedge \extd v^3$.
    
    \item  We denote the function $K_t \in \Omega^0(TQ)$, which depends parametrically on the electromagnetic forms through the rule $K_t(z) = K(z,\pi^* E_t(z),\pi^*B_t(z))$. 
    We then identify the term $K(\bm{z}_t(z),\bm{E}_t(\bm{x}_t(z)),\bm{B}_t(\bm{x}_t(z))) = K_t(g_t(z)) = g_t^*K_t(z)$ as a function on $TQ$. 
    
    \item The functions $\vartheta_{\alpha}$ are analoguously viewed as the components of a phase-space one-form $\vartheta\in\Omega^1(TQ)$ expressed in local coordinates as $\vartheta_z=\vartheta_{\alpha}(\bm{z})\extd z^{\alpha}\in T_z^{\ast}TQ$. We view $\vartheta_\alpha(\bm{z}_t(z))\partial_t z^\alpha_t(z) = \vartheta_{g_t(z)}(\xi_t(g_t(z))) =(\iprod_{\xi_t}\vartheta)(g_t(z)) = g_t^*(\iprod_{\xi_t}\vartheta)(z) = g_t^*(\iprod_{\partial_t g_t\circ g_t^{-1}} \vartheta)(z)$ as a function on $TQ$.
    
\end{enumerate}
The electromagnetic part of the action, the third line in \eqref{eq:Low-action}, when written in geometric terms, becomes
\begin{align}
    S_{EM}[A_t,\phi_t]=\int\limits_{t_1}^{t_2}\int\limits_{Q}\frac{1}{2}\big(\varepsilon_0 E_t\wedge\star E_t-\mu_0^{-1}(B_{\text{ext}}+B_t)\wedge\star(B_{\text{ext}}+B_t)\big)dt,
\end{align}
where $\star:\Omega^k(Q)\to \Omega^{n-k}(Q)$ is the Hodge star operator induced by the Riemannian metric on $Q$.

Conversion of the first and second line of \eqref{eq:Low-action} proceeds by substituting the definitions from the list above and using the change of coordinates formula on the manifold so that the entire action can be written as
\begin{align}\label{eq:Low-action-geometric}
S_F[\bm{z}_t,\bm{A}_t,\phi_t]=&
\int\limits_{t_1}^{t_2}\int\limits_{TQ}
g_t^*\big(\iprod_{\partial_tg_t\circ g_t^{-1}}(\vartheta+e\pi^*A_t) -(K_t+e\pi^*\phi_t)\big)f dt+ S_{EM}[A_t,\phi_t]\nonumber 
    \\
    =&\int\limits_{t_1}^{t_2}\int\limits_{\rev{\text{Im}_{g_t}(TQ)}}
\big(\iprod_{\partial_tg_t\circ g_t^{-1}}(\vartheta+e\pi^*A_t )-(K_t+e\pi^*\phi_t)\big)g_{t\ast}f dt + S_{EM}[A_t,\phi_t]\nonumber 
    \\
    =&\int\limits_{t_1}^{t_2}\int\limits_{TQ}
\big(\iprod_{\partial_tg_t\circ g_t^{-1}}(\vartheta+e\pi^*A_t )-(K_t+e\pi^*\phi_t)\big)g_{t\ast}f dt+ S_{EM}[A_t,\phi_t]\nonumber 
    \\
    =&S_f[g_t,A_t,\phi_t]
\end{align}
The last step follows from the fact that diffeomorphisms are one-to-one maps, meaning that $\text{Im}_{g_t}(TQ)=TQ$. Here the subscript $f$ in $S_f[g_t,A_t,\phi_t]$ stresses the parametric dependency on the fixed volume form $f$, rather than on the scalar $F$ as in the non-geometric expression.

The conversion is completed by interpreting $f_t=g_{t\ast}f\in\Omega^6(TQ)$ and $\xi_t=\partial_tg_t\circ g_t^{-1}$ as new but enslaved variables. In particular, by \eqref{th:derivative_pullbackbyinverse}, the variable $f_t$ satisfies the Vlasov equation
\begin{align}\label{eq:advection-f}
    (\partial_t+\lie_{\xi_t})f_t=0.
\end{align}
Under the assumption of the enslaved definitions, we can then interpret the action $S_f[g_t,A_t,\phi_t]$ as a functional $\mathfrak{S}[\xi_t,f_t,A_t,\phi_t]$ of the \emph{Eulerian variables}  $(\xi_t,f_t,A_t,\phi_t)$ according to
\begin{align}\label{eq:Low-action-eulerian}
    \mathfrak{S}[\xi_t,f_t,A_t,\phi_t]\equiv 
    & \int\limits_{t_1}^{t_2}\int\limits_{TQ} 
    \big[\iprod_{\xi_t}(\vartheta+e\pi^*A_t) -(K_t+e\pi^*\phi_t)\big]f_t dt + S_{EM}[A_t,\phi_t]
\end{align}
where $f_t \in \Omega^6(TQ)$ is promoted to the set of variables as a dynamical top-form. In what follows, we will be using the kinetic energy functional to denote
\begin{align}
    \mathcal{K}[f_t,E_t,B_t]:=\int\limits_{TQ}K_t f_t = \int\limits_{TQ}K(z,\pi^{\ast}E_t(z), \pi^{\ast}B_t(z))f_t.
\end{align}

The process of switching from the Lagrangian variables to the Eulerian by enslaving the relations between $\xi_t$, $f_t$, and $g_t$ is the basis of Euler-Poincar\'e right-reduction \cite{arnold-1966,Cendra_et_al:1998,Holm_1998}.

\subsection{Constrained variations and Euler-Lagrange conditions}
Hamilton's principle of stationary action applied to \eqref{eq:Low-action-geometric} is equivalent to Hamilton's principle of least action applied to \eqref{eq:Low-action-eulerian} as long as we remember the enslaving relations $\xi_t=\partial_tg_t\circ g_t^{-1}$ and $f_t=g_{t\ast}f$. In practice, these relations have consequences on the type of variations the fields $\xi_t$ and $f_t$ are allowed. 
From \eqref{eq:Low-action-geometric}, one perturbs the one-parameter diffeomorphism $g_t$ to a two-parameter diffeomorphism $g_{t,s}$, the one-form $A_t$ to $A_{t,s}$, and the zero-form $\phi_t$ to $\phi_{t,s}$, and computes the variation of the action in the form
\begin{align}
    \partial_s|_{s=0}S_f[g_{t,s},A_{t,s},\phi_{t,s}]=\delta S_f[\delta g_t, \delta A_t, \delta \phi_t],
\end{align}
where $\delta g_t(z)=\partial_s|_{s=0}g_{t,s}(z)\in T_{g_t(z)}TQ$, $\delta A_t=\partial_s|_{s=0}A_{t,s}\in\Omega^1(Q)$, and $\delta\phi_t=\partial_s|_{s=0}\phi_{t,s}\in\Omega^0(Q)$ are arbitrary but vanishing at $t=t_1$ and $t=t_2$. Then, one requests that the first variation of the action vanishes, in accordance with the Hamilton's principle.

Alternatively, and perhaps more directly, variation of the action can be recorded with the variables $\xi_t$ and $f_t$ by simply letting $\xi_{t,s}=\partial_t g_{t,s}\circ g_{t,s}^{-1}$ and $f_{t,s}=g_{t,s\ast}f$, and writing
\begin{align}
    &\partial_s|_{s=0}S_f[g_{t,s},A_{t,s},\phi_{t,s}]=\partial_s|_{s=0}\mathfrak{S}[\xi_{t,s},f_{t,s},A_{t,s},\phi_{t,s}]=\delta \mathfrak{S}[\delta\xi_t,\delta f_{t},\delta A_{t},\delta\phi_{t}]
\end{align}
as long as the variations of the Eulerian variables respect the relations
\begin{align}
    \delta\xi_t&=\partial_s|_{s=0}(\partial_tg_{t,s}\circ g_{t,s}^{-1})\in\mathfrak{X}(TQ),\\
    \delta f_t&=\partial_s|_{s=0}({g_{t,s}}_\ast f)\in\Omega^6(TQ).
\end{align}
These expressions can be made more transparent by introducing the arbitrary time-dependent vector field $\eta_t=\delta g_t\circ g_t^{-1}\in\mathfrak{X}(TQ)$, which vanishes for $t=t_1$ and $t=t_2$ since $\delta g_t$ does, and by using the Corollary~\ref{th:derivative_pullbackbyinverse} and the Theorem~\ref{th:derivative_of_vector_field} to recover the identities
\begin{align}
    \delta f_t&=-\lie_{\eta_{t}}f_{t},\label{eq:delta-f}\\
    \delta\xi_t&=(\partial_t+\lie_{\xi_t})\eta_t\label{eq:delta-xi}.
\end{align}
Putting the constrained variations to work, we then compute the variation of the action~\eqref{eq:Low-action-eulerian}. After applying the Leibniz rule a couple of times (for both the Lie derivative and the temporal derivative), the result can be expressed as
\begin{align}\label{eq:variation}
    &\delta \mathfrak{S}[(\partial_t+\lie_{\xi_t})\eta_t,-\lie_{\eta_t} f_{t},\delta A_{t},\delta\phi_{t}]\nonumber
    \\
    &=\int\limits_{t_1}^{t_2}\int\limits_{TQ}\left\{
    (\partial_t+\lie_{\xi_t})\left(\iprod_{\eta_t}(\vartheta+e\pi^{\ast}A_t)f_t\right)
    -\lie_{\eta_t}\left[(\iprod_{\xi_t}(\vartheta+e\pi^{\ast}A_t)-(K_t+e\pi^{\ast}\phi_t))f_t\right]\right\}dt \nonumber\\
    &-\int\limits_{t_1}^{t_2}\int\limits_{TQ}\iprod_{\eta_t}[(\partial_t + \iprod_{\xi_t}\extd)(\vartheta+e\pi^{\ast}A_t) + \extd (K_t+e\pi^{\ast}\phi_t)]  f_t dt\nonumber
    \\
    &+\int\limits_{t_1}^{t_2}\int\limits_{TQ}e\left(\iprod_{\xi_t}\pi^{\ast}\delta A_t-\pi^{\ast}\delta\phi_t\right)f_t dt\nonumber
    \\
    &+\int\limits_{t_1}^{t_2}\int\limits_{Q}\left\{\extd\left[\star\left(\varepsilon_0 E_t-\frac{\delta \mathcal{K}}{\delta E_t}\right)\right]\delta\phi_t-\extd\left[\star\left(\varepsilon_0 E_t-\frac{\delta \mathcal{K}}{\delta E_t}\right)\delta\phi_t\right]\right\}dt\nonumber
    \\
    &+\int\limits_{t_1}^{t_2}\int\limits_{Q}\delta A_t\wedge\left[\star\partial_t\left(\varepsilon_0 E_t-\frac{\delta \mathcal{K}}{\delta E_t}\right)-\extd\star\left(\mu_0^{-1}(B_{\text{ext}}+B_t)+\frac{\delta\mathcal{K}}{\delta B_t}\right)\right]dt\nonumber
    \\
    &-\int\limits_{t_1}^{t_2}\int\limits_{Q}\left\{\extd \left[\delta A_t\wedge\star\left(\mu_0^{-1}(B_{\text{ext}}+B_t)+\frac{\delta\mathcal{K}}{\delta B_t}\right)\right]+\partial_t\left[\delta A_t\wedge\star\left(\varepsilon_0 E_t-\frac{\delta \mathcal{K}}{\delta E_t}\right)\right]\right\} dt.
\end{align}
In the above equation, the functional derivatives of the kinetic-energy functional are identified via the relations
\begin{align}
    \partial_s|_{s=0}\mathcal{K}[f_t,E_{t,s},B_t]&=\int\limits_Q\frac{\delta \mathcal{K}}{\delta E_t}\wedge\star\partial_s|_{s=0}E_{t,s}\\
    \partial_s|_{s=0}\mathcal{K}[f_t,E_t,B_{t,s}]&=\int\limits_Q\frac{\delta\mathcal{K}}{\delta B_t}\wedge\star\partial_s|_{s=0}B_{t,s},
\end{align}
These expressions are well defined since we explicitly request the function $K$ not to depend on the derivatives of $E_t$ or $B_t$. 

Since $\partial Q=\emptyset$ and $\partial TQ=\emptyset$, the spatial boundary terms in \eqref{eq:variation} will vanish. Furthermore, since $\eta_{t},\delta A_t,\delta\phi_t$ all vanish at $t=t_1$ and $t=t_2$, also the temporal boundary terms will vanish. For the Hamilton's principle of stationary action to hold, namely that $\delta \mathfrak{S}[(\partial_t+\lie_{\xi_t})\eta_t,-\lie_{\eta_t} f_{t},\delta A_{t},\delta\phi_{t}]=0$ with respect to arbitrary $\eta_t,\delta A_t,\delta\phi_t$, it is enough to request the following Euler-Lagrange conditions for 
the vector field $\xi_t$
\begin{align}\label{eq:Euler-Lagrange-xi}
    \extd(K_t+e\pi^{\ast}\phi_t)+(\partial_t+\iprod_{\xi_t}\extd)(\vartheta+e\pi^{\ast}A_t)=
    \iprod_{\xi_t}(\extd \vartheta + e\pi^\ast B_t) + \extd K_t -  e \pi^\ast E_t =0,
\end{align}
for the magnetic one-form $A_t$ 
\begin{align}\label{eq:Euler-Lagrange-A}
    \int\limits_{TQ}ef_t\iprod_{\xi_t}\pi^{\ast}\delta A_t=\int\limits_{Q}\delta A_t\wedge(\extd\star H_t-\star \partial_t D_t) \quad \Longleftrightarrow \quad \star\partial_t D_t+e\pi_{\ast}(\iprod_{\xi_t}f_t)=\extd\star H_t,
\end{align}
and for the scalar potential $\phi_t$
\begin{align}\label{eq:Euler-Lagrange-phi}
    \int\limits_{TQ}e\pi^*\delta\phi_t f_t&=\int\limits_{Q}\delta\phi_t\extd\star D_t \quad \Longleftrightarrow \quad\extd \star D_t=e\pi_{\ast}(f_t).
\end{align}
Here $\pi_{\ast}(\cdot)$ denotes a fibre integral\footnote{Given a map $h:E\rightarrow P$, fibre integration $h_{\ast}(\cdot)$ satisfies $\int_P \alpha\wedge h_*(\beta) = \int_E h^*\alpha\wedge \beta $. Taking $E=TQ$, $P=Q$, $h=\pi$, $\alpha=\delta A_t$ and $\beta=f_t$, we rewrite $\int_{TQ}\iprod_{\xi_t}\pi^*\delta A_t f_t = \int_{TQ}\pi^*\delta A_t\wedge \iprod_{\xi_t}f_t = \int_{Q}\delta A_t\wedge\pi_{\ast}(\iprod_{\xi_t}f_t)$, where the first step follows because $f_t$ is a top-form and so $\omega\wedge f_t=0$ for any $\omega\in \Omega^k(TQ)$ and because the interior product is an anti-derivation, namely $\iprod_X(\omega\wedge\beta)=\iprod_X\omega\wedge\beta +(-1)^k\omega\wedge \iprod_X \beta$.} from $TQ$ down to $Q$, and the one-form $D_t\in\Omega^1(Q)$ and the two-form $H_t\in\Omega^2(Q)$ have been introduced to denote the displacement and magnetising fields
\begin{align}
    D_t&=\varepsilon_0 E_t-\frac{\delta\mathcal{K}}{\delta E_t},\\
    H_t&=\mu_0^{-1}(B_{\text{ext}}+B_t)+\frac{\delta\mathcal{K}}{\delta B_t}.
\end{align}
\rev{Note that equation \eqref{eq:Euler-Lagrange-xi}, determining the vector field $\xi_t$, effectively provides the characteristics of single-particle motion for the Vlasov equation while the equations \eqref{eq:Euler-Lagrange-A} and \eqref{eq:Euler-Lagrange-phi} are the geometric versions of the Amp\`ere-Maxwell equation and the Gauss's law for the electric field including polarization and magnetization effects.}
\section{Noether equations for spatial isometries and time translations}
To study the effects of spatial isometries\footnote{Isometries on a manifold $M$ are distance preserving diffeomorphism. On $\mathbb{R}^3$ these include constant translations and rotations. The pullbacks of isometries commute with the Hodge operator $\star$.} and time translations, we will construct a new functional that is obtained from the action functional evaluated over not the whole of $Q$ and $TQ$ but the subsets $U\subseteq Q$ and $TU=\bigcup\{(x,v_x)|x\in U,v_x\in T_xQ\}\subseteq TQ$. In effect, this new functional can then be treated as to parametrically depend on the domain $U$ and the temporal end-points $t_1$ and $t_2$. The new functional we introduce is given by
\begin{align}\label{eq:action-reparametrization}
    \mathfrak{S}_{U,t_1,t_2}[\xi_t,f_t,A_t,\phi_t]=&\int\limits_{t_1}^{t_2}\int\limits_{TU}\iprod_{\xi_t}\vartheta f_tdt-\int\limits_{t_1}^{t_2}\mathcal{K}_{TU}[f_t,E_t,B_t]dt+\int\limits_{t_1}^{t_2}\int\limits_{TU}\big(e\iprod_{\xi_t}\pi^{\ast}A_t-e\pi^{\ast}\phi_t\big)f_tdt\nonumber
    \\&+\int\limits_{t_1}^{t_2}\int\limits_{U}\frac{1}{2}\big(\varepsilon_0 E_t\wedge\star E_t-\mu_0^{-1}(B_{\text{ext}}+B_t)\wedge\star(B_{\text{ext}}+B_t)\big)dt,
\end{align}
where the modified kinetic energy functional is defined in the natural way
\begin{align}
    \mathcal{K}_{TU}[f_t,E_t,B_t]:=\int\limits_{TU}K_t f_t=\int\limits_{TU}K(z,\pi^{\ast}E_t(z), \pi^{\ast}B_t(z))f_t.
\end{align}
Trivially, if we choose $U=Q$, we obtain the original action. 

A few remarks are in order here. In what follows, the functional \eqref{eq:action-reparametrization} will be varied and the functional derivatives of $\mathcal{K}_{TU}$ used. This might raise some questions since no specific form of the function $K_t$ is given yet. Specifically, one could question whether the functional derivatives $\delta \mathcal{K}_{TU}/\delta E_t$ and $\delta\mathcal{K}_{TU}/\delta B_t$ exists at all with respect to an arbitrary domain $U$. This small curiosity was the reason why we restricted our discussion to such $K_t$ which do not depend on the derivatives of $E_t$ or $B_t$. Then the functional derivatives $\delta \mathcal{K}_{TU}/\delta E_t$ and $\delta \mathcal{K}_{TU}/\delta B_t$ are not only well defined but are, in fact, equal to the functional derivatives of $\mathcal{K}$.

\subsection{Spatial isometries}
The idea in analysing symmetries related to spatial isometries is to introduce a one-parameter isometry $\psi_s\in\text{Diff}(Q)$ with $\psi_0=\text{id}$ and its lift $\Psi_s\in\text{Diff}(TQ)$ with $\Psi_0=\text{id}$. The lift in our context means that $\Psi_s$ is required to satisfy $\pi\circ\Psi_s=\psi_s\circ \pi$. Consequently, there will be the vector fields $X=\partial_s|_{s=0}\psi_s\circ \psi_0^{-1}$ and $\widetilde{X}=\partial_s|_{s=0}\Psi_s\circ \Psi_0^{-1}$ which act as the infinitesimal generators for $\psi_s$ and $\Psi_s$ respectively, and are $\pi$-related, i.e., $T\pi\circ \widetilde{X}=X\circ\pi$, and it can be shown that $\Psi_{s\ast}\pi^\ast \alpha = \pi^\ast \psi_{s\ast}\alpha$ for any $\alpha\in \Omega^k(Q)$. Furthermore, since $TU$ is locally $U\times\mathbb{R}^3$, we have that $\text{Im}_{\Psi_s}(TU)=T\text{Im}_{\psi_s}(U)$. With these definitions in mind, one performs a coordinate transformation, acting with $\Psi_s$ on the $TU$ part and with $\psi_s$ on the $U$ part of \eqref{eq:action-reparametrization}, and obtains
\begin{align}\label{eq:spatial-coordinate-transform}
    \mathfrak{S}_{U,t_1,t_2}[\xi_t,f_t,A_t,\phi_t]=&\mathfrak{S}_{\text{Im}_{\psi_s}(U),t_1,t_2}[\Psi_{s\ast}\xi_t,\Psi_{s\ast}f_t,\psi_{s\ast}A_t,\psi_{s\ast}\phi_t]
    \nonumber
    \\
    &+\int\limits_{t_1}^{t_2}\int\limits_{\text{Im}_{\Psi_s}(TU)}\iprod_{\Psi_{s\ast}\xi_t}(\Psi_{s\ast}-\text{id})\vartheta\, \Psi_{s\ast}f_tdt
    \nonumber
    \\
    &-\int\limits_{t_1}^{t_2}\int\limits_{\text{Im}_{\Psi_s}(TU)}[\Psi_{s\ast}K_t-K(z,\Psi_{s\ast}\pi^{\ast}E_t(z),\Psi_{s\ast}\pi^{\ast}B_t(z))]\Psi_{s\ast}f_tdt
    \nonumber
    \\&-\int\limits_{t_1}^{t_2}\int\limits_{\text{Im}_{\psi_s}(U)}\mu_0^{-1}(\psi_{s\ast}-\text{id})B_{\text{ext}}\wedge\star(B_{\text{ext}}+\psi_{s\ast}B_t)dt\nonumber
    \\&-\int\limits_{t_1}^{t_2}\int\limits_{\text{Im}_{\psi_s}(U)}\frac{1}{2}\mu_0^{-1}(\psi_{s\ast}-\text{id})B_{\text{ext}}\wedge\star(\psi_{s\ast}-\text{id})B_{\text{ext}}dt.
\end{align}

If some specific isometry $\psi_s$ and its lift $\Psi_s$ are to generate a symmetry in the sense that
\begin{align}\label{eq:spatial-symmetry}
    \mathfrak{S}_{U,t_1,t_2}[\xi_t,f_t,A_t,\phi_t]=\mathfrak{S}_{\text{Im}_{\psi_s}(U),t_1,t_2}[\Psi_{s\ast}\xi_t,\Psi_{s\ast}f_t,\psi_{s\ast}A_t,\psi_{s\ast}\phi_t],
\end{align}
then this isometry and its lift have to satisfy the conditions
\begin{subequations}
\label{eq:spatial-symmetry-conditions}
\begin{align}
    \psi_{s\ast}B_{\text{ext}}&=B_{\text{ext}},\\
    \Psi_{s\ast}\vartheta&=\vartheta,\\
    K(\Psi_s^{-1}(z),\pi^{\ast}E_t(\Psi_s^{-1}(z)),\pi^{\ast}B_t(\Psi_s^{-1}(z)))&=K(z,\Psi_{s\ast}\pi^{\ast}E_t(z),\Psi_{s\ast}\pi^{\ast}B_t(z)).
\end{align}
\end{subequations}
If the conditions \eqref{eq:spatial-symmetry-conditions} are satisfied, the existence of a local conservation law will be guaranteed by Noether's first theorem. These are the {\it strong} conditions for a conservation law to exists. There are also weaker conditions, which we will discuss shortly.

To extract the local conservation law, the expression \eqref{eq:spatial-symmetry} will be differentiated with respect to $s$ at $s=0$ and evaluated {\it on-shell}, i.e., the Euler-Lagrange conditions required to hold. This provides, subject to the symmetry conditions, that
\begin{align}
    0=&\partial_s|_{s=0}\mathfrak{S}_{\text{Im}_{\psi_s}(U),t_1,t_2}[\xi_t,f_t,A_t,\phi_t]+\delta \mathfrak{S}_{U,t_1,t_2}[-\lie_{\widetilde{X}}\xi_t,-\lie_{\widetilde{X}}f_t,-\lie_{X}A_t,-\lie_{X}\phi_t].
\end{align}
Applying the fundamental theorem of calculus, the first term can be evaluated immediately
\begin{align}\label{eq:spatial-boundary-derivative}
    &\partial_s|_{s=0}\mathfrak{S}_{\text{Im}_{\psi_s}(U),t_1,t_2}[\xi_t,f_t,A_t,\phi_t]\nonumber
    \\
    &=\int\limits_{t_1}^{t_2}\int\limits_{TU}\lie_{\widetilde{X}}\left(\iprod_{\xi_t}\vartheta f_t-K_tf_t+(e\iprod_{\xi_t}\pi^{\ast}A_t-e\pi^{\ast}\phi_t)f_t\right)dt\nonumber
    \\&\qquad+\int\limits_{t_1}^{t_2}\int\limits_{U}\frac{1}{2}\lie_{X}\big(\varepsilon_0 E_t\wedge\star E_t-\mu_0^{-1}(B_{\text{ext}}+B_t)\wedge\star(B_{\text{ext}}+B_t)\big)dt.
\end{align}
To evaluate the term $\delta \mathfrak{S}_{U,t_1,t_2}[-\lie_{\widetilde{X}}\xi_t,-\lie_{\widetilde{X}}f_t,-\lie_{X}A_t,-\lie_{X}\phi_t]$ on-shell, we use the fact that $X$ and $\widetilde{X}$ are both independent of time $t$ so that $-\lie_{\widetilde{X}}\xi_t=(\partial_t+\lie_{\xi_t})\widetilde{X}$. This helps us recognize that the term can be evaluated as a special case of $\eqref{eq:variation}$ with $\eta_t=\widetilde{X}$, $\delta A_t=-\lie_XA_t=-\iprod_XB_t-\extd(\iprod_XA_t)$ and $\delta\phi_t=-\lie_X\phi_t=\iprod_XE_t+\partial_t(\iprod_XA_t)$, now only evaluated over $U$ and $TU$ instead of $Q$ and $TQ$. This means that when the Euler-Lagrange conditions are implied, only the boundary terms, that vanish in \eqref{eq:variation}, will remain. It is then a straightforward task to compute the on-shell variation \rev{(see \ref{sec:on-shell-variation-spatial-details} for details)}
\begin{align}\label{eq:on-shell-variation-spatial}
    &\delta \mathfrak{S}_{U,t_1,t_2}[(\partial_t+\lie_{\xi_t})\widetilde{X},-\lie_{\widetilde{X}} f_{t},-\iprod_XB_t-\extd(\iprod_XA_t),\iprod_XE_t+\partial_t(\iprod_XA_t)]\nonumber
    \\
    &=\int\limits_{t_1}^{t_2}\int\limits_{TU}(\partial_t+\lie_{\xi_t})\big(f_t\iprod_{\widetilde{X}}\vartheta\big)-\lie_{\widetilde{X}}\Big[f_t\iprod_{\xi_t}(\vartheta+e\pi^{\ast}A_t)-(K_t+e\pi^{\ast}\phi_t)f_t\Big]dt\nonumber
    \\
    &\qquad+\int\limits_{t_1}^{t_2}\int\limits_{U}\Big[\extd (\iprod_XB_t\wedge\star H_t-\star D_t\iprod_XE_t)+\partial_t(\iprod_XB_t\wedge\star D_t)\Big] dt.
\end{align}
Finally, combining the on-shell variation \eqref{eq:on-shell-variation-spatial} with the expression \eqref{eq:spatial-boundary-derivative}, and requesting the result to be true with respect to arbitrary domain $U$, a local conservation law is obtained
\begin{align}\label{eq:spatial-conservation-law}
    &\partial_t(\pi_{\ast}(f_t\iprod_{\widetilde{X}}\vartheta)+\iprod_XB_t\wedge\star D_t)+\pi_{\ast}(\lie_{\xi_t}(f_t\iprod_{\widetilde{X}}\vartheta))-\extd(\rev{B_t\iprod_X\star H_t}+\star D_t\iprod_XE_t)
    \nonumber
    \\
    &\rev{+\lie_X(B_t\wedge\star H_t)+\frac{1}{2}\lie_{X}\big(\varepsilon_0 E_t\wedge\star E_t-\mu_0^{-1}(B_{\text{ext}}+B_t)\wedge\star(B_{\text{ext}}+B_t)\big)}=0.
\end{align}

At this point, we remind that for this equation to hold, the symmetry conditions \eqref{eq:spatial-symmetry-conditions} must be true. In case the isometry does not satisfy the symmetry conditions, one may still differentiate \eqref{eq:spatial-coordinate-transform} with respect to $s$ at $s=0$ and account for the remaining volumetric terms. In that case, equation \eqref{eq:spatial-conservation-law} would be modified by a volumetric source term $S$ appearing on the right, the source term being
\begin{align}\label{eq:source}
    S=&\pi_{\ast}(\iprod_{\xi_t}\lie_{\widetilde{X}}\vartheta f_t-\lie_{\widetilde{X}}K_tf_t)+\frac{\delta \mathcal{K}}{\delta E_t}\wedge\star\lie_XE_t\nonumber
    \\&+\frac{\delta \mathcal{K}}{\delta B_t}\wedge\star\lie_XB_t+\mu_0^{-1}\lie_XB_{\text{ext}}\wedge\star(B_{\text{ext}}+B_t).
\end{align}
From this expression, we see that the {\it weak} condition for a conservation law to exist is that this source term vanishes, given the Euler-Lagrange conditions. Alternatively, the source term can be used to investigate the momentum balance of the system in directions other than the obvious symmetry direction of the external magnetic field.

\subsection{Constant translations in time}
Analysing constant translations in time is simpler than the analysis of spatial isometries for there is no need to consider lifts or diffeomorphisms at all. Since the action does not have parametric dependencies on time, i.e., $\partial_t\vartheta=0$ and the function $K_t$ depends on time only via $E_t$ and $B_t$, we immediately obtain for any constant $T$ the following, strong symmetry condition
\begin{align}\label{eq:temporal-symmetry}
    \mathfrak{S}_{U,t_1,t_2}[\xi_t,f_t,A_t,\phi_t]=\mathfrak{S}_{U,t_1+T,t_2+T}[\xi_{t-T},f_{t-T},A_{t-T},\phi_{t-T}]
\end{align}
and there will be a related conservation law guaranteed by Noether's first theorem.

To extract the conservation law, we proceed as with the spatial isometries, differentiating \eqref{eq:temporal-symmetry} with respect to $T$ at $T=0$:
\begin{align}
    0=&\partial_T|_{T=0}\mathfrak{S}_{U,t_1+T,t_2+T}[\xi_t,f_t,A_t,\phi_t]+\delta \mathfrak{S}_{U,t_1,t_2}[-\partial_t\xi_t,-\partial_tf_t,-\partial_tA_t,-\partial_t\phi_t].
\end{align}
Using again the fundamental theorem of calculus, the first term is straightforward to evaluate
\begin{align}\label{eq:temporal-boundary-derivative}
    &\partial_T|_{T=0}\mathfrak{S}_{U,t_1+T,t_2+T}[\xi_t,f_t,A_t,\phi_t]\nonumber
    \\
    &=\int\limits_{t_1}^{t_2}\int\limits_{TU}\partial_t\big(f_t\iprod_{\xi_t}(\vartheta+e\pi^{\ast}A_t)-(K_t+e\pi^{\ast}\phi_t)f_t\big)dt\nonumber
    \\&\qquad+\int\limits_{t_1}^{t_2}\int\limits_{U}\frac{1}{2}\partial_t\big(\varepsilon_0 E_t\wedge\star E_t-\mu_0^{-1}(B_{\text{ext}}+B_t)\wedge\star(B_{\text{ext}}+B_t)\big)dt.
\end{align}
To evaluate the second term, we apply a trick similar to what we used in analysing the spatial isometries: we re-express $-\partial_t\xi_t=(\partial_t+\lie_{\xi_t})(-\xi_t)$ and $-\partial_t f_t=-\lie_{-\xi_t}f_t$. This observation then helps us identify that $\delta \mathfrak{S}_{U,t_1,t_2}[-\partial_t\xi_t,-\partial_tf_t,-\partial_tA_t,-\partial_t\phi_t]$ is effectively a special case of \eqref{eq:variation} with $\eta_t=-\xi_t$, $\delta A_t=E_t+\extd\phi_t$, and $\delta\phi_t=-\partial_t\phi_t$, now only evaluated over $U$ and $TU$ instead of $Q$ and $TQ$. Direct substitution then provides the on-shell variation \rev{(see \ref{sec:on-shell-variation-temporal-details} for details)}
\begin{align}\label{eq:on-shell-variation-temporal}
    &\delta \mathfrak{S}_{U,t_1,t_2}[(\partial_t+\lie_{\xi_t})(-\xi_t),-\lie_{-\xi_t} f_{t},E_t+\extd\phi_t,-\partial_t\phi_t]\nonumber
    \\
    &=-\int\limits_{t_1}^{t_2}\int\limits_{TU}\partial_t\big(f_t\iprod_{\xi_t}(\vartheta+e\pi^{\ast}A_t)-(K_t+e\pi^{\ast}\phi_t)f_t\big)dt\nonumber
    \\
    &\qquad-\int\limits_{t_1}^{t_2}\int\limits_{U}\Big[\extd (E_t\wedge\star H_t)+\partial_t(E_t\wedge\star D_t)\Big] dt
    \nonumber
    \\
    &\qquad-\int\limits_{t_1}^{t_2}\int\limits_{TU}(\partial_t+\lie_{\xi_t})(f_tK)dt.
\end{align}
Putting everything together by summing \eqref{eq:temporal-boundary-derivative} and \eqref{eq:on-shell-variation-temporal}, and noting that the domain $U$ is arbitrary, we obtain the local conservation law for the energy density
\begin{align}\label{eq:temporal-conservation-law}
    &\partial_t\big(\pi_{\ast}(f_tK_t)+E_t\wedge\star D_t-\tfrac{1}{2}\varepsilon_0 E_t\wedge\star E_t+\tfrac{1}{2}\mu_0^{-1}(B_{\text{ext}}+B_t)\wedge\star(B_{\text{ext}}+B_t)\big)\nonumber
    \\
    &\qquad+\pi_{\ast}(\lie_{\xi_t}(f_tK_t))+\extd (E_t\wedge\star H_t)=0.
\end{align}

\section{Example applications}
Explicitly, we shall consider two models, namely the full-particle Vlasov-Maxwell \rev{in a background magnetic field} and the drift-kinetic Vlasov-Maxwell that is obtainable as the long-wave-length limit of the gyrokinetic Vlasov-Maxwell system \cite{Burby-Brizard:2019PhLA}. For the external magnetic field, we shall consider the axially symmetric, time-independent magnetic field often encountered in a tokamak. In cylindrical coordinates $(R,\varphi,z)$, the vector-calculus representation of such field is given by
\begin{align}
    \bm{B}_{\text{ext}}=G(R,z)\nabla\varphi+\nabla\Psi(R,z)\times\nabla\varphi.
\end{align}
This field admits a rotational symmetry with respect to an isometry $\psi_s$ and the related vector field $X=\partial_s|_{s=0}\psi_s\circ\psi_0^{-1}$, that are defined via
\begin{subequations}
    \label{eq:isometry}
    \begin{align}
        &\psi_s(R,\varphi,z)=(R,\varphi+s,z),\\
        &X=\hat{\bm{z}}\times\bm{x}\cdot\nabla=\bm{e}_{\varphi}\cdot\nabla=\partial_{\varphi}.
    \end{align}
\end{subequations}
Expressed mathematically, the symmetry exists in the sense of
\begin{align}
    \psi_{s\ast}B_{\text{ext}}&=B_{\text{ext}}\\
    \psi_{s\ast}A_{\text{ext}}&=A_{\text{ext}}.
\end{align}
which, in coordinates and in differential sense, means that $\partial_{\varphi}\bm{B}_{\text{ext}}=\hat{\bm{z}}\times\bm{B}_{\text{ext}}$ and $\partial_{\varphi}\bm{A}_{\text{ext}}=\hat{\bm{z}}\times\bm{A}_{\text{ext}}$. Naturally, since this field admits only a rotational symmetry, there will be no conservation law for linear momentum density. The conservation law for linear momentum density would require a translational symmetry in $B_{\text{ext}}$, a case which we leave as an exercise for an interested reader to verify with the machinery we have presented in the previous section.

And since we are merely applying the machinery derived earlier, we will perform the computations in this section in coordinates and provide the results in terms of regular vector calculus. This choice will hopefully make these example computations approachable to a larger audience.

\subsection{Classic full-particle Vlasov-Maxwell}
In the classic Vlasov-Maxwell system, the kinetic energy of a particle depends only on the velocity coordinate $\bm{v}$. Considering the possibility of the external axially symmetric magnetic field, the one-form $\vartheta$ and the kinetic energy function $K$ are then given by the coordinate expressions
\begin{align}
    \vartheta&=e\bm{A}_{\text{ext}}\cdot\extd \bm{x}+m\bm{v}\cdot\extd \bm{x},\\
    K_t&=\frac{1}{2}m|\bm{v}|^2.
\end{align}
In component form, the Euler-Lagrange condition \eqref{eq:Euler-Lagrange-xi} for $\xi_t$ is given by
\begin{align}
    m\bm{v}\cdot\extd\bm{v}-e(\bm{E}_t+\bm{\xi}^x_t\times(\bm{B}_{\text{ext}}+\bm{B}_t))\cdot\extd\bm{x}+m\bm{\xi}^{v}_t\cdot\extd\bm{x}-m\bm{\xi}^x_t\cdot\extd\bm{v}=0,
\end{align}
which is straightforward to invert for the components
\begin{align}
    \bm{\xi}_t^x&=\bm{v},\\
    \bm{\xi}_t^v&=\frac{e}{m}(\bm{E}_t+\bm{v}\times(\bm{B}_{\text{ext}}+\bm{B}_t)).
\end{align}
Furthermore, since the energy function $K_t$ is now entirely independent of the electric and magnetic field, the components of the one-form $D_t$ and the two-form $H_t$ are given by $\bm{D}_t=\varepsilon_0\bm{E}_t$ and $\bm{H}_t=\mu_0^{-1}(\bm{B}_{\text{ext}}+\bm{B}_t)$. The equations \eqref{eq:Euler-Lagrange-A} and \eqref{eq:Euler-Lagrange-phi} then provide the standard Gauss's and Faraday's laws
\begin{align}
    \varepsilon_0\partial_t\bm{E}_t-\mu_0^{-1}\nabla\times(\bm{B}_{\text{ext}}+\bm{B}_t)+\bm{j}_t&=0,\\
    \varepsilon_0\nabla\cdot\bm{E}_t-\varrho_t&=0,
\end{align}
with the current and charge densities computed from the density $f_t=F_td^3\bm{x}d^3\bm{v}$ as the velocity space integrals
\begin{align}
    \bm{j}_t&=e\int \bm{\xi}_t^xF_t d^3\bm{v},\\
    \varrho_t&=e\int F_t d^3\bm{v}.
\end{align}
Finally, the Vlasov equation is obtained from the enslaved advection condition
\begin{align}
    (\partial_t+\lie_{\xi_t})f_t=(\partial_tF_t+\partial_{z^{\alpha}}(\xi_t^{\alpha}F_t))d^6\bm{z}=0.
\end{align}

To check the symmetry conditions \eqref{eq:spatial-symmetry-conditions}, we use their differential form (differentiation with respect to $s$) and consider the tangential lift $\Psi_s(x,v)=(\psi_s(x),\psi_s(v))$ with the corresponding vector field given in components according to
\begin{align}
    \tilde{X}=\hat{\bm{z}}\times\bm{x}\cdot\nabla+\hat{\bm{z}}\times\bm{v}\cdot\partial/\partial_{\bm{v}}
\end{align}
It is then a straightforward to verify that
\begin{align}
    &\lie_{\widetilde{X}}(\bm{v}\cdot\extd\bm{x})=0,\\    &\lie_{\widetilde{X}}\tfrac{1}{2}|\bm{v}|^2=0,
\end{align}
Obtaining the associated conservation law is then a matter of translating \eqref{eq:spatial-conservation-law} to the language of ordinary vector calculus. The result, the conservation law for the angular momentum density, becomes
\begin{align}
    &\partial_t\Big(\int F_t (m\bm{v}+e\bm{A}_{\text{ext}})\cdot\bm{e}_{\varphi}d^3\bm{v}\rev{+\varepsilon_0\bm{E}_t\times\bm{B}_t\cdot\bm{e}_{\varphi}}\Big)\nonumber
    \\
    &+\nabla\cdot\Big(\int \bm{v}F_t (m\bm{v}+e\bm{A}_{\text{ext}})\cdot\bm{e}_{\varphi}d^3\bm{v}+\tfrac{1}{2}\varepsilon_0|\bm{E}_t|^2\bm{e}_{\varphi}-\tfrac{1}{2}\mu_0^{-1}|\bm{B}_{\text{ext}}+\bm{B}_t|^2\bm{e}_{\varphi}\nonumber
    \\
    &\qquad-\varepsilon_0\bm{E}_t\bm{E}_t\cdot\bm{e}_{\varphi}-\mu_0^{-1}\bm{B}_t(\bm{B}_{\text{ext}}+\bm{B}_t)\cdot\bm{e}_{\varphi}+\mu_0^{-1}\bm{B}_t\cdot(\bm{B}_{\text{ext}}+\bm{B}_t)\bm{e}_{\varphi}\Big)=0.
\end{align}
In a similar manner, we translate \eqref{eq:temporal-conservation-law} to vector calculus and write down the conservation law for energy density
\begin{align}
    &\partial_t\Big(\int \tfrac{1}{2}m|\bm{v}|^2 F_td^3\bm{v}+\tfrac{1}{2}\varepsilon_0|E_t|^2+\tfrac{1}{2}\mu_0^{-1}|\bm{B}_{\text{ext}}+\bm{B}_t|^2\Big)\nonumber
    \\
    &+\nabla\cdot\Big(\int\tfrac{1}{2}m|\bm{v}|^2\bm{v}F_td^3\bm{v}+\mu_0^{-1}\bm{E}_t\times(\bm{B}_{\text{ext}}+\bm{B}_t)\Big)=0.
\end{align}

\subsection{Drift-kinetic Vlasov-Maxwell}
In the drift-kinetic Vlasov-Maxwell, the one-form $\vartheta$ and the kinetic energy $K_t$ are given by the coordinate expressions
\begin{align}
    \vartheta&=e\bm{A}_{\text{ext}}\cdot\extd\bm{x}+mv_{\parallel}\bm{b}_{\text{ext}}\cdot \extd\bm{x}+(m/e)\mu \extd\theta\\
    K_t&=\frac{1}{2}mv_{\parallel}^2+\mu |\bm{B}_{\text{ext}}|\left(1+\frac{\bm{b}_{\text{ext}}\cdot\bm{B}_t}{|\bm{B}_{\text{ext}}|}+\frac{|\bm{B}_{t\perp}|^2}{2|\bm{B}_{\text{ext}}|^2}\right)-\frac{m}{2|\bm{B}_{\text{ext}}|^2}|\bm{E}_{t\perp}+v_{\parallel}\bm{b}_{\text{ext}}\times\bm{B}_t|^2
\end{align}
with the subscript $\perp$ referring to dot product with the dyad $\mathbf{1}_{\perp}=\mathbf{1}-\bm{b}_{\text{ext}}\bm{b}_{\text{ext}}$ and $\bm{b}_{\text{ext}}=\bm{B}_{\text{ext}}/|\bm{B}_{\text{ext}}|$ is the unit vector in the direction of the external magnetic field. The Euler-Lagrange condition \eqref{eq:Euler-Lagrange-xi} for the vector field $\xi_t$ gives 
\begin{align}
    &\nabla K_t\cdot\extd\bm{x}+\partial_{v_{\parallel}}K_t\extd v_{\parallel}+\partial_{\mu}K_t\extd \mu-e(\bm{E}_t+\bm{\xi}_t^x\times(\bm{B}_t+\bm{B}_{\text{ext}}+(m/e)v_{\parallel}\nabla\times\bm{b}_{\text{ext}}))\cdot\extd\bm{x}\nonumber
    \\
    &+(m/e)(\xi_t^{\mu}\extd \theta-\xi_t^{\theta}\extd\mu) +\xi^{v_{\parallel}}_tm\bm{b}_{\text{ext}}\cdot\extd\bm{x}-m\bm{b}_{\text{ext}}\cdot\bm{\xi}_t^x \extd v_{\parallel}=0.
\end{align}
From this expression, we invert for the components
\begin{align}
    \bm{\xi}_t^x&=\frac{\partial_{v_{\parallel}}K_t}{m}\frac{\bm{B}_t^{\star}}{\bm{b}_{\text{ext}}\cdot\bm{B}_t^{\star}}+\frac{(e\bm{E}_t-\nabla K_t)\times\bm{b}_{\text{ext}}}{e\bm{b}_{\text{ext}}\cdot\bm{B}_t^{\star}},\\
    \xi_t^{v_{\parallel}}&=\frac{\bm{B}_t^{\star}\cdot(e\bm{E}_t-\nabla K_t)}{m\bm{b}_{\text{ext}}\cdot\bm{B}^{\star}_t},\\
    \xi_t^{\mu}&=0,\\
    \xi_t^{\theta}&=\frac{e}{m}\frac{\partial K_t}{\partial\mu},
\end{align}
where $\bm{B}_t^{\star}=\bm{B}_t+\bm{B}_{\text{ext}}+(m/e)v_{\parallel}\nabla\times\bm{b}_{\text{ext}}$. The Euler-Lagrange conditions for $A_t$ \eqref{eq:Euler-Lagrange-A} and $\phi_t$ \eqref{eq:Euler-Lagrange-phi} provide 
\begin{align}
    \partial_t\bm{D}_t-\nabla\times\bm{H}_t+\bm{j}_t&=0,\\
    \nabla\cdot\bm{D}_t-\varrho_t,&=0
\end{align}
where the macroscopic fields $\bm{D}_t$ and $\bm{B}_t$ and the sources $\bm{j}_t$ and $\varrho_t$ are defined as 
\begin{align}
    \bm{D}_t&=\varepsilon_0\bm{E}_t-\int \partial_{\bm{E}_t}K_t F_t dv_{\parallel}d\mu d\theta,\\
    \bm{H}_t&=\mu_0^{-1}(\bm{B}_{\text{ext}}+\bm{B}_t)+\int \partial_{\bm{B}_t}K_t F_tdv_{\parallel}d\mu d\theta,\\
    \bm{j}_t&=\int e\bm{\xi}^x_t F_t dv_{\parallel}d\mu d\theta,\\
    \varrho_t&=\int e F_t dv_{\parallel}d\mu d\theta.
\end{align}
The Vlasov equation is obtained, as previously, from the enslaved advection condition
\begin{align}
    (\partial_t+\lie_{\xi_t})f_t=(\partial_tF_t+\partial_{z^{\alpha}}(\xi_t^{\alpha}F_t))d^6\bm{z}=0.
\end{align}

To check the symmetry conditions \eqref{eq:spatial-symmetry-conditions}, we again use their differential form and consider the tangential lift $\Psi_s(x,v)=(\psi_s(x),\psi_s(v))$. Now the component form of the vector field $\widetilde{X}$ is, however, given by the expression
\begin{align}
    \tilde{X}=\hat{\bm{z}}\times\bm{x}\cdot\nabla,
\end{align}
which follows from the fact that rotating the guiding-center-particle velocity along the symmetry direction of the external magnetic field does not change the values of the coordinates $v_{\parallel}$, $\mu$, or $\theta$ as they are defined locally with respect to the direction and magnitude of the external magnetic field.
It is then a straightforward computation to verify the infinitesimal forms of the symmetry conditions, namely that
\begin{align}
    &\lie_{\widetilde{X}}\vartheta=e(\bm{A}_{\text{ext}}^{\star}\times\nabla\times\bm{e}_{\varphi}+\bm{e}_{\varphi}\cdot\nabla \bm{A}_{\text{ext}}^{\star}+\bm{A}_{\text{ext}}^{\star}\cdot\nabla\bm{e}_{\varphi})\cdot d\bm{x}=0,
    \\ 
    &\partial_{\varphi}K_t+\partial_{\bm{B}_t}K_t\cdot(\hat{\bm{z}}\times\bm{B}_t-\partial_{\varphi}\bm{B}_t)+\partial_{\bm{E}_t}K_t\cdot(\hat{\bm{z}}\times\bm{E}_t-\partial_{\varphi}\bm{E}_t)=0,
\end{align}
where $e\bm{A}^{\star}_{\text{ext}}=e\bm{A}_{\text{ext}}+mv_{\parallel}\bm{b}_{\text{ext}}$.
The conservation law for angular momentum density is then obtained after translating \eqref{eq:spatial-conservation-law} to the language of ordinary vector calculus. The result is
\begin{align}
    &\partial_t\Big(\int F_t (e\bm{A}_{\text{ext}}+mv_{\parallel}\bm{b}_{\text{ext}})\cdot\bm{e}_{\varphi}dv_\parallel d\mu d\theta\rev{+\bm{D}_t\times\bm{B}_t\cdot\bm{e}_{\varphi}}\Big)\nonumber
    \\
    &+\nabla\cdot\Big(\int\bm{\xi}_t^xF_t (e\bm{A}_{\text{ext}}+mv_{\parallel}\bm{b}_{\text{ext}})\cdot\bm{e}_{\varphi}dv_\parallel d\mu d\theta+\tfrac{1}{2}\varepsilon_0|\bm{E}_t|^2\bm{e}_{\varphi}\nonumber
    \\
    &\qquad-\tfrac{1}{2}\mu_0^{-1}|\bm{B}_{\text{ext}}+\bm{B}_t|^2\bm{e}_{\varphi}-\bm{D}_t\bm{E}_t\cdot\bm{e}_{\varphi}-\bm{B}_t\bm{H}_t\cdot\bm{e}_{\varphi}+\bm{B}_t\cdot\bm{H}_t\bm{e}_{\varphi}\Big)=0.
\end{align}
In a similar manner, we translate \eqref{eq:temporal-conservation-law} to vector calculus and obtain the conservation law for energy density
\begin{align}
    &\partial_t\Big(\int K_tF_tdv_\parallel d\mu d\theta+\bm{D}_t\cdot\bm{E}_t-\tfrac{1}{2}\varepsilon_0|\bm{E}_t|^2+\tfrac{1}{2}\mu_0^{-1}|\bm{B}_0+\bm{B}_t|^2\Big)\nonumber\\
    &+\nabla\cdot\Big(\int \bm{\xi}_t^x K_tF_tdv_\parallel d\mu d\theta+\bm{E}_t\times\bm{H}_t\Big)=0.
\end{align}

\section{Summary}
In this paper, we have reviewed the geometric interpretation of the Euler-Poincar\'e formulation for the purposes of applying it to Vlasov-Maxwell-type systems encountered in the kinetic theory of plasmas, and explained how the possible conservation laws related to constant rotations and translations in space and translations in time can be obtained in an algorithmic manner. After the rather mathematical exposition, two explicit examples were given---the full-particle \rev{Vlasov-Maxwell in an axially symmetric tokamak-like background magnetic field} and the drift-kinetic Vlasov-Maxwell \rev{obtained as the long-wave-length limit of the gyrokinetic model \cite{Burby-Brizard:2019PhLA}}---with the results being translated to the language of regular vector calculus in the end. We hope that readers would find the demonstrative calculations helpful in their own endeavours and that the explicit demonstrations of the geometric take on the Euler-Poincar\'e methodology would help unmask its potential to the plasma physics community.

\begin{acknowledgments}
    Research presented in this article was supported by the Academy of Finland grant no. 315278, by the Los Alamos National Laboratory LDRD program under project number 20180756PRD4, and by the National Science Foundation under contract No. PHY-1805164. Any subjective views or opinions expressed herein do not necessarily represent the views of the Academy of Finland, Aalto University, Los Alamos National Laboratory, The University of Western Australia, or the National Science Foundation. 
\end{acknowledgments}

\appendix
\section{Certain useful identities}
It is useful to review a few identities in order to understand the origins of the constrained variations in the Euler-Poincar\'e formalism. Parts of this material are covered in, e.g., Ref. \cite{Holm_1998} Section 6, where also the general theory of Euler-Poincar\'e reduction is presented. We first recall some basic definitions:

\begin{definition}[Tangent map]
Given a smooth map $\varphi:U\to V$ between open subset $U\subseteq \R^m$ and $V\subseteq \R^n$, the differential of $\varphi$ at point $x\in U$, $T_x\varphi:\R^m \to \R^n$ is the unique linear map such that $\lim\limits_{||v||_{\R^m}\to 0} ||\varphi(x+v)-\varphi(x)-T_x\varphi(v)||_{\R^n}/||v||_{\R^m}=0$. This concept generalises to smooth maps $\varphi:M\to N$ between any smooth manifolds $M$ and $N$, defining what is called the tangent map (or pushforward) $T\varphi:TM\to TN$.

As a bundle map, it can be seen that $\varphi\circ \pi_M =  \pi_N\circ T\varphi$, where $\pi_M:TM\to M$ and $\pi_N:TN\to N$ are canonical projections.
\end{definition}

\begin{definition}[Pullback of $k$-form]
Let $\varphi:M\to N$ be a smooth map between smooth manifolds $M$ and $N$, and let $\alpha \in \Omega^k(N)$ be a $k$-form on $N$. The pullback of $\alpha$ is a $k$-form on $M$, $\varphi^\ast\alpha\in\Omega^k(M)$, defined by $(\varphi^\ast\alpha)_x(v_1,\ldots,v_k) = \alpha_{\varphi(x)}(T_x\varphi(v_1),\ldots,T_x\varphi(v_k))$. In the case of a zero-form (or function) $f\in \Omega^0(N)$, the pullback reduces to precomposition $\varphi^\ast f = f\circ \varphi\in\Omega^0(M)$. 

The most important properties of the pullback is that it is compatible with the wedge product, $\varphi^\ast (\alpha\wedge \beta) = \varphi^\ast\alpha\wedge \varphi^\ast\beta$, and commutes with the exterior derivative, $\varphi^\ast (\extd\alpha)=\extd(\varphi^\ast \alpha)$.
\end{definition}

Now, let $M$ be an $m$-dimensional manifold and $g_t:M\to M$ a smooth family of diffeomorphisms (smooth mappings with smooth inverses) with parameter $t\in I\subseteq \R$ taking values in an open interval $I$. The sequence of mappings induces a curve $x(t)=g_t(x_0)\in M$ for each individual reference point $x_0\in M$. The reference point $x_0$ should not be interpreted as an initial condition but rather as a \emph{label} for the particle moving along the curve $x(t)$. (See Section 1 of Ref. \cite{Holm_1998} for a discussion of particle relabeling symmetry in fluid theories.) The time-derivative of such curve is a tangent vector at $x(t)$, i.e. $\dot{x}(t)=\partial_t g_t(x_0)=X_t(x(t))\in T_{x(t)}M$ where the time-dependent vector field $X_t:=\partial_t g_t\circ g_t^{-1}:M\to TM$ has been identified.

\begin{proposition}[Derivative of a pull back of a time-dependent function]\label{th:derivative_of_pullback_of_function}
Let $f_t:M\rightarrow\mathbb{R}$ be a time-dependent function. Let $X_t:M\to TM$ be the time-dependent vector field associated to diffeomorphism $g_t:M \rightarrow M$. The pullback of $f_t$ by $g_t$ satisfies
\begin{align*}
\partial_t (g_t^{\ast}f_t)=g_{t}^{\ast}(\partial_t f_t+\lie_{X_t}f_t).
\end{align*}
\end{proposition}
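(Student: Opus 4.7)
The plan is to evaluate $\partial_t(g_t^{\ast}f_t)$ pointwise via the chain rule, splitting the derivative into two pieces — one coming from the explicit time-dependence of $f_t$ and one coming from the motion of the diffeomorphism $g_t$ — and then to recognise each piece as the pullback of something natural.

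First, I would unfold the definition: for any $x\in M$, the zero-form pullback is precomposition, so $(g_t^{\ast}f_t)(x)=f_t(g_t(x))$. I would then evaluate $\partial_t$ of this scalar expression by the chain rule applied to the two-variable map $(t,x)\mapsto f_t(g_t(x))$. This produces two terms: the partial derivative of $f_t$ with respect to $t$ evaluated at the spatial point $g_t(x)$, plus the tangent map of $f_t$ applied to the velocity $\partial_t g_t(x)$.

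Next I would rewrite each term geometrically. The first term is exactly $(\partial_t f_t)(g_t(x)) = g_t^{\ast}(\partial_t f_t)(x)$. For the second term, I would use the defining relation of $X_t$, namely $\partial_t g_t(x)=X_t(g_t(x))$, so the term equals $T_{g_t(x)}f_t\bigl(X_t(g_t(x))\bigr)$. Because $f_t$ is a zero-form, the action of the tangent map is just the directional derivative, i.e.\ $X_t[f_t](g_t(x))=(\lie_{X_t}f_t)(g_t(x)) = g_t^{\ast}(\lie_{X_t}f_t)(x)$, using the fact that for functions the Lie derivative coincides with the directional derivative. Summing the two terms and invoking linearity of the pullback gives the desired identity.

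I do not anticipate a genuine obstacle: this is essentially a chain-rule computation dressed up in bundle language. The only point requiring a little care is keeping the two time dependencies — the one in the form $f_t$ and the one in the diffeomorphism $g_t$ — cleanly separated, so that the partial $\partial_t f_t$ in the first summand is not confused with a total time derivative. Once that bookkeeping is done the result is immediate, and it is the scalar-valued, $k=0$ prototype from which the analogous identities for general $k$-forms (and their corollary $\partial_t(g_{t\ast}\alpha)=-g_{t\ast}\lie_{X_t}\alpha+g_{t\ast}\partial_t\alpha$) used in \eqref{eq:delta-f}–\eqref{eq:delta-xi} are built.
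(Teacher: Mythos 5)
Your proposal is correct and matches the paper's own proof essentially verbatim: both evaluate $\partial_t\bigl(f_t(g_t(x))\bigr)$ by the chain rule, identify the explicit-time term as $g_t^{\ast}(\partial_t f_t)$, and use $\partial_t g_t(x)=X_t(g_t(x))$ together with $\extd f_t(X_t)=\lie_{X_t}f_t$ to recognise the flow term as $g_t^{\ast}(\lie_{X_t}f_t)$. No gaps.
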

\begin{proof}
For $x_0\in M$ and its characteristic curve $x(t)=g_t(x_0)$, we have $g_t^\ast f_t(x_0)=f_t(x(t))$. 
Using the chain rule for differentiation, direct calculation gives
\begin{align*}
\partial_t(g_t^{\ast}f_t)(x_0)
&=\partial_t f_t (x(t)) +  {\extd f_t}_{x(t)}(\dot{x}(t))\\
&=\partial_t f_t(g_t(x_0)) + \lie_{X_t}f_t(g_t(x_0))\\
&=[g_t^\ast (\partial_t f_t + \lie_{X_t}f_t)](x_0).
\end{align*}
\end{proof}
\begin{proposition}[Derivative of a pullback of a time-dependent $k$-form]\label{th:derivative_of_a_pullback}
Given a time-dependent $k$-form $\alpha_t$ and a time-dependent vector field $X_t$ related to family of diffeomorphisms $g_t$, one has
\begin{align*}
\partial_t(g_t^{\ast}\alpha_t)=g_t^{\ast}(\partial_t\alpha_t+\lie_{X_t}\alpha_t)
\end{align*}
\end{proposition}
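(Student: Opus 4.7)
The plan is to reduce the statement to the $0$-form case already established in Proposition~\ref{th:derivative_of_pullback_of_function}, by exploiting the fact that the exterior algebra is generated by functions and their differentials, together with the compatibility of the pullback with both the wedge product and the exterior derivative.

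First, I would handle the case of an exact one-form $\extd f_t$ associated to a time-dependent function $f_t$. Since $\extd$ commutes with $g_t^{\ast}$ and with $\partial_t$, Proposition~\ref{th:derivative_of_pullback_of_function} yields
\begin{align*}
\partial_t(g_t^{\ast}\extd f_t)
=\extd\,\partial_t(g_t^{\ast}f_t)
=\extd\,g_t^{\ast}(\partial_t f_t+\lie_{X_t}f_t)
=g_t^{\ast}\bigl(\partial_t\extd f_t+\lie_{X_t}\extd f_t\bigr),
\end{align*}
where the last equality uses $[\extd,\lie_{X_t}]=0$, which follows from Cartan's magic formula $\lie_{X_t}=\iprod_{X_t}\extd+\extd\iprod_{X_t}$.

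Next, I would establish a Leibniz-type stability under wedge products: assuming the identity for $\alpha_t\in\Omega^p(M)$ and $\beta_t\in\Omega^q(M)$, I would use $g_t^{\ast}(\alpha_t\wedge\beta_t)=g_t^{\ast}\alpha_t\wedge g_t^{\ast}\beta_t$, differentiate in $t$, apply the induction hypothesis to each factor, and recombine using the graded Leibniz rules for $\partial_t$ and $\lie_{X_t}$ on wedges to obtain
\begin{align*}
\partial_t\,g_t^{\ast}(\alpha_t\wedge\beta_t)
=g_t^{\ast}\bigl[\partial_t(\alpha_t\wedge\beta_t)+\lie_{X_t}(\alpha_t\wedge\beta_t)\bigr].
\end{align*}

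To conclude, I would exploit the local coordinate expansion $\alpha_t=\sum_I f_{I,t}\,\extd x^{i_1}\wedge\cdots\wedge\extd x^{i_k}$, in which the coordinate functions $x^i$ are time-independent (so that both sides of the claimed identity vanish trivially for them, or equivalently are already covered by the first step applied to the functions $x^i$). By $\R$-linearity of $\partial_t$, $g_t^{\ast}$, and $\lie_{X_t}$, combined with the two preceding reductions, the identity extends from these local generators to arbitrary $\alpha_t\in\Omega^k(M)$. The main (modest) obstacle is organizational: because $X_t$ is genuinely time-dependent, the clean one-parameter-flow derivation $\tfrac{d}{dt}\phi_t^{\ast}\alpha=\phi_t^{\ast}\lie_X\alpha$ is unavailable, and one must route the argument through the function case, where the chain rule (rather than a flow) carries the content.
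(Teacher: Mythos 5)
Your proof is correct and follows essentially the same route as the paper's: establish the identity for functions (Proposition~\ref{th:derivative_of_pullback_of_function}), propagate it through $\extd$ (using that the exterior derivative commutes with the pullback and the Lie derivative) and through wedge products (graded Leibniz rules), then generate all $k$-forms from these building blocks. The only quibble is your parenthetical claim that both sides ``vanish trivially'' for the coordinate functions $x^i$ --- they do not, since $\partial_t(g_t^{\ast}x^i)\neq 0$ in general --- but your alternative phrasing, that the $x^i$ are simply covered by the function case with $\partial_t x^i=0$, is the correct one.
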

\begin{proof}
We first assume that it is true for the forms $\alpha_t$ and $\beta_t$. Then it is true for $\alpha_t\wedge\beta_t$ because
\begin{align*}
\partial_t(g_t^{\ast}(\alpha_t\wedge\beta_t))&=\partial_t(g_t^{\ast}\alpha_t)\wedge g_t^{\ast}\beta_t+g_t^{\ast}\alpha_t\wedge\partial_t(g_t^{\ast}\beta_t)\\
%&=g_t^{\ast}(\partial_t\alpha_t+\lie_{X_t}\alpha_t)\wedge g_t^{\ast}\beta_t+g_t^{\ast}\alpha_t\wedge g_t^{\ast}(\partial_t\beta_t+\lie_{X_t}\beta_t)\\
&=g_t^{\ast}((\partial_t\alpha_t+\lie_{X_t}\alpha_t)\wedge\beta_t+\alpha_t\wedge (\partial_t\beta_t+\lie_{X_t}\beta_t))\\
%&=g_t^{\ast}(\partial_t\alpha_t\wedge\beta_t+\alpha_t\wedge\partial\beta_t+\lie_{X_t}\alpha_t\wedge\beta_t+\alpha_t\wedge\lie_{X_t}\beta_t)\\
&=g_t^{\ast}(\partial_t(\alpha_t\wedge\beta_t)+\lie_{X_t}(\alpha_t\wedge\beta_t))
\end{align*}
and trivially also for $\extd\alpha_t$ and $\extd\beta_t$ as the exterior derivative commutes with the pullback and the Lie derivative. Proposition~\ref{th:derivative_of_pullback_of_function} shows that it is true for zero forms (functions). All other forms can be constructed from zero forms using combinations of $\extd$ and $\wedge$, so the result follows.
\end{proof}

\begin{corollary}
\label{th:derivative_pullbackbyinverse}
Let $\alpha$ be a (time-independent) $k$-form on $M$ and let $\alpha_t={g_t^{-1}}^\ast \alpha=: g_{t*}\alpha$ be the pushforward of $\alpha$ by $g_t$. Then,
\begin{align*}
    (\partial_t + \lie_{X_t})\alpha_t =0, 
\end{align*}
where $X_t=\partial_t g_t\circ g_t$ is the related vector field.
\end{corollary}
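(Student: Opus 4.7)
The plan is to apply Proposition~\ref{th:derivative_of_a_pullback} directly but in a slightly indirect way, taking advantage of the identity $g_t^{-1}\circ g_t = \text{id}$. First I would observe that, since $\alpha_t = g_{t\ast}\alpha = (g_t^{-1})^\ast \alpha$, we have the tautology
\begin{equation*}
g_t^\ast \alpha_t = g_t^\ast (g_t^{-1})^\ast \alpha = (g_t^{-1}\circ g_t)^\ast \alpha = \alpha.
\end{equation*}
Because $\alpha$ is time-independent by hypothesis, the left-hand side of this identity has vanishing time derivative, that is, $\partial_t(g_t^\ast \alpha_t)=0$.

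Next I would invoke Proposition~\ref{th:derivative_of_a_pullback} applied to the time-dependent $k$-form $\alpha_t$ with respect to the diffeomorphism family $g_t$ whose associated Eulerian vector field is precisely $X_t=\partial_t g_t\circ g_t^{-1}$. The proposition gives
\begin{equation*}
\partial_t(g_t^\ast \alpha_t)=g_t^\ast(\partial_t \alpha_t+\lie_{X_t}\alpha_t).
\end{equation*}
Combining this with the previous step yields $g_t^\ast(\partial_t\alpha_t+\lie_{X_t}\alpha_t)=0$.

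Finally, I would conclude by observing that $g_t$ is a diffeomorphism, so its pullback $g_t^\ast:\Omega^k(M)\to\Omega^k(M)$ is a linear isomorphism with inverse $(g_t^{-1})^\ast$. Applying this inverse to both sides gives the desired identity $\partial_t\alpha_t+\lie_{X_t}\alpha_t=0$, which proves the corollary.

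There is no real obstacle here; the argument is a two-line application of Proposition~\ref{th:derivative_of_a_pullback} together with the observation that $g_t^\ast$ is injective on forms. The only conceptual care required is to recognize that one should test the identity against $g_t^\ast$ rather than trying to differentiate $(g_t^{-1})^\ast\alpha$ directly, which would otherwise force one to compute the vector field $\partial_t g_t^{-1}\circ g_t$ associated with the inverse family and verify that its effect is the negative of $X_t$ in the appropriate sense. The pullback-and-cancel trick avoids all of that bookkeeping.
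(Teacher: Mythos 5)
Your proposal is correct and is essentially identical to the paper's own proof: both use the tautology $g_t^\ast\alpha_t=\alpha$, differentiate in time, and invoke Proposition~\ref{th:derivative_of_a_pullback} together with the invertibility of the pullback. You have merely spelled out the steps the paper leaves implicit.
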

\begin{proof}
 Since $g_t^\ast  \alpha_t = \alpha$, $\partial_t g_t^\ast \alpha_t = \partial_t\alpha=0$, and the result follows from Proposition~\ref{th:derivative_of_a_pullback}.
\end{proof}

%\david{There is no such thing as a pushforward of a $k$-form. The corollary above replace the next theorem (to be deleted). }
%\josh{That's not right. What is right is that you cannot define the pushforward of a form along a map that is not a diffeomorphism. However pushforward along a diffeomorphism makes perfect sense. You can define it as the inverse of pullback. It's silly to always refer to inverse-of-pullback rather than pushforward though; which is quicker to write, in symbols and in english?}
%\begin{theorem}[Derivative of a pushforward of a time-independent $k$-form]\label{th:derivative_of_a_pushforward}
%Given a time-independent $k$-form $\alpha$ and a time-dependent vector field $X_t$ with a flow $g_t$, the time derivative of a pushforwarded form is
%\begin{align*}
%\partial_t g_{t\ast}\alpha=-\lie_{X_t}(g_{t\ast}\alpha)
%\end{align*}
%\end{theorem}
%\begin{proof}
%Choosing $\alpha_t=g_{t\ast}\alpha$ in Proposition~\ref{th:derivative_of_a_pullback}, one finds $\partial_t(g_t^{\ast}\alpha_t)=\partial_t(g_t^{\ast}g_{t\ast}\alpha)=\partial_t\alpha=0$, and the result follows using Proposition~\ref{th:derivative_of_a_pullback}.
%\end{proof}

\begin{proposition}[Derivative of a vector field of a two-parameter diffeomorphism]\label{th:derivative_of_vector_field}
Let $g_{t,s}:M\rightarrow M$ be a two-parameter family of diffeomorphisms with $(t,s)\in I\times J\subseteq \R^2$ (open), which generates the pair of two-parameter vector fields $X_{t,s}=\partial_tg_{t,s}\circ g_{t,s}^{-1}$ and $Y_{t,s}=\partial_s g_{t,s}\circ g_{t,s}^{-1}$. Then
\begin{align*}
\partial_sX_{t,s}-\partial_tY_{t,s}=[X_{t,s},Y_{t,s}]=-[Y_{t,s},X_{t,s}],
\end{align*}
where $[X,Y]=\lie_{X}Y$ is the Lie bracket of vector fields satisfying $\lie_{X}\lie_{Y}-\lie_{Y}\lie_{X}=\lie_{[X,Y]}$.
\end{proposition}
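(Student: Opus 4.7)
The plan is to prove the identity by testing both sides against an arbitrary smooth function $f\in\Omega^0(M)$ and exploiting the equality of mixed partial derivatives of $g_{t,s}^{\ast}f$. Since a vector field on $M$ is determined by its action on functions (i.e. $V=W$ iff $\lie_V f=\lie_W f$ for every $f$), it suffices to establish
\begin{align*}
\lie_{\partial_s X_{t,s}-\partial_t Y_{t,s}}f=\lie_{[X_{t,s},Y_{t,s}]}f
\end{align*}
for all $f\in C^{\infty}(M)$.

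First I would apply Proposition~\ref{th:derivative_of_pullback_of_function} (the function case of the derivative-of-pullback identity, treating $f$ as trivially time-independent or $s$-independent as appropriate) separately in the two parameter directions to obtain $\partial_t (g_{t,s}^{\ast}f)=g_{t,s}^{\ast}(\lie_{X_{t,s}}f)$ and $\partial_s (g_{t,s}^{\ast}f)=g_{t,s}^{\ast}(\lie_{Y_{t,s}}f)$. Then I would differentiate each of these a second time, now with respect to the opposite parameter, invoking Proposition~\ref{th:derivative_of_pullback_of_function} again, this time applied to the time/parameter-dependent functions $\lie_{X_{t,s}}f$ and $\lie_{Y_{t,s}}f$. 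This yields
\begin{align*}
\partial_s\partial_t(g_{t,s}^{\ast}f)&=g_{t,s}^{\ast}\bigl(\partial_s\lie_{X_{t,s}}f+\lie_{Y_{t,s}}\lie_{X_{t,s}}f\bigr),\\
\partial_t\partial_s(g_{t,s}^{\ast}f)&=g_{t,s}^{\ast}\bigl(\partial_t\lie_{Y_{t,s}}f+\lie_{X_{t,s}}\lie_{Y_{t,s}}f\bigr).
\end{align*}
Using the Leibniz/chain rule on functions, $\partial_s\lie_{X_{t,s}}f=\partial_s(\extd f(X_{t,s}))=\extd f(\partial_s X_{t,s})=\lie_{\partial_s X_{t,s}}f$, and similarly for the $t$-derivative.

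Next I would equate the two mixed partials (they agree because $g_{t,s}^{\ast}f$ is a smooth function of $(t,s)$), cancel the pullback by $g_{t,s}$ which is invertible, and rearrange to obtain
\begin{align*}
\lie_{\partial_s X_{t,s}-\partial_t Y_{t,s}}f=\bigl(\lie_{X_{t,s}}\lie_{Y_{t,s}}-\lie_{Y_{t,s}}\lie_{X_{t,s}}\bigr)f=\lie_{[X_{t,s},Y_{t,s}]}f,
\end{align*}
where the last equality is the stated defining property of the Lie bracket on vector fields. Since $f$ was arbitrary, the vector-field identity $\partial_s X_{t,s}-\partial_t Y_{t,s}=[X_{t,s},Y_{t,s}]$ follows, and the antisymmetry of the bracket furnishes the second equality in the claim.

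I do not expect any real obstacle. The only delicate point is the little computation showing that $\partial_s\lie_{X_{t,s}}f=\lie_{\partial_s X_{t,s}}f$, i.e.\ that differentiation in the parameter commutes with the action of the vector field on functions; this is immediate because $\lie_V f$ is linear in $V$ at each point. Everything else is a mechanical application of the already-proved Proposition~\ref{th:derivative_of_pullback_of_function} together with Clairaut's theorem on equality of mixed partials.
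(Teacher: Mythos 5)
Your proposal is correct and follows essentially the same route as the paper's own proof: both test against an arbitrary function $f$, apply Proposition~\ref{th:derivative_of_pullback_of_function} twice to compute the mixed partials of $g_{t,s}^{\ast}f$, invoke Clairaut's theorem, and read off the bracket identity from the arbitrariness of $f$. Your explicit remark that $\partial_s\lie_{X_{t,s}}f=\lie_{\partial_s X_{t,s}}f$ by linearity is a small justification the paper leaves implicit, but it does not change the argument.
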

\begin{proof}
Let $f:M\rightarrow \mathbb{R}$ be a function and set $f_{t,s}=g_{t,s}^{\ast}f$. The partial derivatives with respect to $t$ and $s$ commute (Clairaut's theorem). Using Proposition~\ref{th:derivative_of_pullback_of_function} (first for time-independent function and then for a time-dependent function), we compute
\begin{align*}
0&=\partial_t\partial_sf_{t,s}-\partial_s\partial_tf_{t,s}\\
&=\partial_t(g_{t,s}^{\ast}(\lie_{Y_{t,s}}f))-\partial_s(g_{t,s}^{\ast}(\lie_{X_{t,s}}f))\\
&=g_{t,s}^{\ast}(\partial_t\lie_{Y_{t,s}}f+\lie_{X_{t,s}}\lie_{Y_{t,s}}f-\partial_s\lie_{X_{t,s}}f-\lie_{Y_{t,s}}\lie_{X_{t,s}}f)\\
&=g_{t,s}^{\ast}(\lie_{\partial_tY_{t,s}-\partial_sX_{t,s}+[X_{t,s},Y_{t,x}]}f)
\end{align*}
Because $f$ is arbitrary, the result follows.
\end{proof}

\section{Derivation of \eqref{eq:on-shell-variation-spatial}}\label{sec:on-shell-variation-spatial-details}
Assuming $\xi_t$, $A_t$, and $\phi_t$ to satisfy the Euler-Lagrange conditions \eqref{eq:Euler-Lagrange-xi}, \eqref{eq:Euler-Lagrange-A}, and \eqref{eq:Euler-Lagrange-phi}, a direct substitution to \eqref{eq:variation} with the replacements $Q\rightarrow U$ and $TQ\rightarrow TU$ provides
\begin{align}
    &\delta \mathfrak{S}_{U,t_1,t_2}[(\partial_t+\lie_{\xi_t})\widetilde{X},-\lie_{\widetilde{X}} f_{t},-\iprod_XB_t-\extd(\iprod_XA_t),\iprod_XE_t+\partial_t(\iprod_XA_t)]\nonumber
    \\
    &=\int\limits_{t_1}^{t_2}\int\limits_{TU}\left\{
    (\partial_t+\lie_{\xi_t})\left(\iprod_{\widetilde{X}}\vartheta f_t\right)
    -\lie_{\widetilde{X}}\left[(\iprod_{\xi_t}(\vartheta+e\pi^{\ast}A_t)-(K_t+e\pi^{\ast}\phi_t))f_t\right]\right\}dt \nonumber
    \\
    &-\int\limits_{t_1}^{t_2}\int\limits_{U}\left[\extd(\star D_t\iprod_XE_t-\iprod_XB_t\wedge\star H_t)-\partial_t(\iprod_XB_t\wedge
    \star D_t)\right]dt\nonumber
    \\
    &+\int\limits_{t_1}^{t_2}\int\limits_{U}\left\{\extd(\iprod_XA_t)\wedge(\star \partial_tD_t-\extd\star H_t)-(\iprod_X\partial_tA_t)\extd\star D_t \right\}dt 
    \nonumber 
    \\
    &+\int\limits_{t_1}^{t_2}\int\limits_{TU}(\partial_t+\lie_{\xi_t})\left(e\iprod_{\widetilde{X}}\pi^{\ast}A_tf_t\right)dt,
\end{align}
where only rearrangements have occurred. Since $f_t$ satisfies \eqref{eq:advection-f}, one  has
$(\partial_t+\lie_{\xi_t})\left(e\iprod_{\widetilde{X}}\pi^{\ast}A_tf_t\right)=ef_t\pi^{\ast}\iprod_X\partial_tA_t+ef_t\iprod_{\xi_t}\pi^{\ast}\extd(\iprod_XA_t)$. Using once more the equations \eqref{eq:Euler-Lagrange-A} and \eqref{eq:Euler-Lagrange-phi}, with the testing zero- and one-form being $\iprod_X\partial_tA_t$ and $\extd(\iprod_X A_t)$, respectively, one observes that the last two lines in the above variation cancel each other and \eqref{eq:on-shell-variation-spatial} follows.

\section{Derivation of \eqref{eq:on-shell-variation-temporal}}\label{sec:on-shell-variation-temporal-details}
Assuming $\xi_t$, $A_t$, and $\phi_t$ to satisfy the Euler-Lagrange conditions \eqref{eq:Euler-Lagrange-xi}, \eqref{eq:Euler-Lagrange-A}, and \eqref{eq:Euler-Lagrange-phi}, a direct substitution to \eqref{eq:variation} with the replacements $Q\rightarrow U$ and $TQ\rightarrow TU$ provides
\begin{align}
    &\delta \mathfrak{S}_{U,t_1,t_2}[(\partial_t+\lie_{\xi_t})(-\xi_t),-\lie_{-\xi_t} f_{t},E_t+\extd\phi_t,-\partial_t\phi_t]\nonumber
    \\
    &=-\int\limits_{t_1}^{t_2}\int\limits_{TU}
    \partial_t\left(\iprod_{\xi_t}(\vartheta+e\pi^{\ast}A_t)f_t-(K_t+e\pi^{\ast}\phi_t)f_t\right)dt\nonumber
    \\
    &-\int\limits_{t_1}^{t_2}\int\limits_{U}\left\{\extd (E_t\wedge\star H_t)+\partial_t(E_t\wedge\star D_t)\right\}dt-\int\limits_{t_1}^{t_2}\int\limits_{TU}(\partial_t+\lie_{\xi_t})(K_tf_t)dt\nonumber
    \\
    &-\int\limits_{t_1}^{t_2}\int\limits_{U}\left\{-\partial_t\phi_t\extd\star D_t+\extd\phi_t\wedge(\star \partial_tD_t-\extd\star H_t)\right\} dt-\int\limits_{t_1}^{t_2}\int\limits_{TU}(\partial_t+\lie_{\xi_t})(e\pi^{\ast}\phi_tf_t)dt,
\end{align}
where only rearrangements have occurred.  Since $f_t$ satisfies \eqref{eq:advection-f}, one  has
$(\partial_t+\lie_{\xi_t})\left(e\pi^{\ast}\phi_tf_t\right)=ef_t\pi^{\ast}\partial_t\phi_t+ef_t\iprod_{\xi_t}\pi^{\ast}\extd\phi_t$. Using once more the equations \eqref{eq:Euler-Lagrange-A} and \eqref{eq:Euler-Lagrange-phi}, with the testing zero- and one-form being $\partial_t\phi_t$ and $\extd\phi_t$, respectively, one observes that the terms on last line in the above variation cancel each other and \eqref{eq:on-shell-variation-temporal} follows.\\\\\\

\bibliography{references}  

%merlin.mbs apsrev4-1.bst 2010-07-25 4.21a (PWD, AO, DPC) hacked
%Control: key (0)
%Control: author (0) dotless jnrlst
%Control: editor formatted (1) identically to author
%Control: production of article title (0) allowed
%Control: page (1) range
%Control: year (0) verbatim
%Control: production of eprint (0) enabled
\providecommand{\noopsort}[1]{#1}
\begin{thebibliography}{23}%
\makeatletter
\providecommand \@ifxundefined [1]{%
 \@ifx{#1\undefined}
}%
\providecommand \@ifnum [1]{%
 \ifnum #1\expandafter \@firstoftwo
 \else \expandafter \@secondoftwo
 \fi
}%
\providecommand \@ifx [1]{%
 \ifx #1\expandafter \@firstoftwo
 \else \expandafter \@secondoftwo
 \fi
}%
\providecommand \natexlab [1]{#1}%
\providecommand \enquote  [1]{``#1''}%
\providecommand \bibnamefont  [1]{#1}%
\providecommand \bibfnamefont [1]{#1}%
\providecommand \citenamefont [1]{#1}%
\providecommand \href@noop [0]{\@secondoftwo}%
\providecommand \href [0]{\begingroup \@sanitize@url \@href}%
\providecommand \@href[1]{\@@startlink{#1}\@@href}%
\providecommand \@@href[1]{\endgroup#1\@@endlink}%
\providecommand \@sanitize@url [0]{\catcode `\\12\catcode `\$12\catcode
  `\&12\catcode `\#12\catcode `\^12\catcode `\_12\catcode `\%12\relax}%
\providecommand \@@startlink[1]{}%
\providecommand \@@endlink[0]{}%
\providecommand \url  [0]{\begingroup\@sanitize@url \@url }%
\providecommand \@url [1]{\endgroup\@href {#1}{\urlprefix }}%
\providecommand \urlprefix  [0]{URL }%
\providecommand \Eprint [0]{\href }%
\providecommand \doibase [0]{http://dx.doi.org/}%
\providecommand \selectlanguage [0]{\@gobble}%
\providecommand \bibinfo  [0]{\@secondoftwo}%
\providecommand \bibfield  [0]{\@secondoftwo}%
\providecommand \translation [1]{[#1]}%
\providecommand \BibitemOpen [0]{}%
\providecommand \bibitemStop [0]{}%
\providecommand \bibitemNoStop [0]{.\EOS\space}%
\providecommand \EOS [0]{\spacefactor3000\relax}%
\providecommand \BibitemShut  [1]{\csname bibitem#1\endcsname}%
\let\auto@bib@innerbib\@empty
%</preamble>
\bibitem [{\citenamefont {{Brizard}}\ and\ \citenamefont
  {{Tronci}}(2016)}]{Brizard-Tronci:2016}%
  \BibitemOpen
  \bibfield  {author} {\bibinfo {author} {\bibfnamefont {Alain~J.}\
  \bibnamefont {{Brizard}}}\ and\ \bibinfo {author} {\bibfnamefont {Cesare}\
  \bibnamefont {{Tronci}}},\ }\bibfield  {title} {\enquote {\bibinfo {title}
  {{Variational formulations of guiding-center Vlasov-Maxwell theory}},}\
  }\href {\doibase 10.1063/1.4953431} {\bibfield  {journal} {\bibinfo
  {journal} {Physics of Plasmas}\ }\textbf {\bibinfo {volume} {23}},\ \bibinfo
  {eid} {062107} (\bibinfo {year} {2016})}\BibitemShut {NoStop}%
\bibitem [{\citenamefont
  {{Burby}}(2016)}]{Burby-initial-value-problem:2016JPlPh}%
  \BibitemOpen
  \bibfield  {author} {\bibinfo {author} {\bibfnamefont {J.~W.}\ \bibnamefont
  {{Burby}}},\ }\bibfield  {title} {\enquote {\bibinfo {title} {{The initial
  value problem in Lagrangian drift kinetic theory}},}\ }\href {\doibase
  10.1017/S0022377816000532} {\bibfield  {journal} {\bibinfo  {journal}
  {Journal of Plasma Physics}\ }\textbf {\bibinfo {volume} {82}},\ \bibinfo
  {eid} {905820304} (\bibinfo {year} {2016})}\BibitemShut {NoStop}%
\bibitem [{\citenamefont {{Burby}}\ and\ \citenamefont
  {{Brizard}}(2019)}]{Burby-Brizard:2019PhLA}%
  \BibitemOpen
  \bibfield  {author} {\bibinfo {author} {\bibfnamefont {J.~W.}\ \bibnamefont
  {{Burby}}}\ and\ \bibinfo {author} {\bibfnamefont {A.~J.}\ \bibnamefont
  {{Brizard}}},\ }\bibfield  {title} {\enquote {\bibinfo {title} {{Gauge-free
  electromagnetic gyrokinetic theory}},}\ }\href {\doibase
  10.1016/j.physleta.2019.04.019} {\bibfield  {journal} {\bibinfo  {journal}
  {Physics Letters A}\ }\textbf {\bibinfo {volume} {383}},\ \bibinfo {pages}
  {2172--2175} (\bibinfo {year} {2019})}\BibitemShut {NoStop}%
\bibitem [{\citenamefont {{Burby}}\ \emph {et~al.}(2015)\citenamefont
  {{Burby}}, \citenamefont {{Brizard}}, \citenamefont {{Morrison}},\ and\
  \citenamefont {{Qin}}}]{Burby-et-al:2015PhLA}%
  \BibitemOpen
  \bibfield  {author} {\bibinfo {author} {\bibfnamefont {J.~W.}\ \bibnamefont
  {{Burby}}}, \bibinfo {author} {\bibfnamefont {A.~J.}\ \bibnamefont
  {{Brizard}}}, \bibinfo {author} {\bibfnamefont {P.~J.}\ \bibnamefont
  {{Morrison}}}, \ and\ \bibinfo {author} {\bibfnamefont {H.}~\bibnamefont
  {{Qin}}},\ }\bibfield  {title} {\enquote {\bibinfo {title} {{Hamiltonian
  gyrokinetic Vlasov-Maxwell system}},}\ }\href {\doibase
  10.1016/j.physleta.2015.06.051} {\bibfield  {journal} {\bibinfo  {journal}
  {Physics Letters A}\ }\textbf {\bibinfo {volume} {379}},\ \bibinfo {pages}
  {2073--2077} (\bibinfo {year} {2015})}\BibitemShut {NoStop}%
\bibitem [{\citenamefont {{Burby}}(2017)}]{Burby-finite-dimensional:2017PhPl}%
  \BibitemOpen
  \bibfield  {author} {\bibinfo {author} {\bibfnamefont {J.~W.}\ \bibnamefont
  {{Burby}}},\ }\bibfield  {title} {\enquote {\bibinfo {title}
  {{Finite-dimensional collisionless kinetic theory}},}\ }\href {\doibase
  10.1063/1.4976849} {\bibfield  {journal} {\bibinfo  {journal} {Physics of
  Plasmas}\ }\textbf {\bibinfo {volume} {24}},\ \bibinfo {eid} {032101}
  (\bibinfo {year} {2017})}\BibitemShut {NoStop}%
\bibitem [{\citenamefont {{Low}}(1958)}]{Low:1958}%
  \BibitemOpen
  \bibfield  {author} {\bibinfo {author} {\bibfnamefont {F.~E.}\ \bibnamefont
  {{Low}}},\ }\bibfield  {title} {\enquote {\bibinfo {title} {{A Lagrangian
  Formulation of the Boltzmann-Vlasov Equation for Plasmas}},}\ }\href
  {\doibase 10.1098/rspa.1958.0244} {\bibfield  {journal} {\bibinfo  {journal}
  {{Proceedings of the Royal Society of London Series A}}\ }\textbf {\bibinfo
  {volume} {248}},\ \bibinfo {pages} {282--287} (\bibinfo {year}
  {1958})}\BibitemShut {NoStop}%
\bibitem [{\citenamefont {Galloway}\ and\ \citenamefont
  {Kim}(1971)}]{galloway_kim:1971}%
  \BibitemOpen
  \bibfield  {author} {\bibinfo {author} {\bibfnamefont {J.~J.}\ \bibnamefont
  {Galloway}}\ and\ \bibinfo {author} {\bibfnamefont {H.}~\bibnamefont {Kim}},\
  }\bibfield  {title} {\enquote {\bibinfo {title} {Lagrangian approach to
  non-linear wave interactions in a warm plasma},}\ }\href {\doibase
  10.1017/S002237780002568X} {\bibfield  {journal} {\bibinfo  {journal}
  {Journal of Plasma Physics}\ }\textbf {\bibinfo {volume} {6}},\ \bibinfo
  {pages} {53--72} (\bibinfo {year} {1971})}\BibitemShut {NoStop}%
\bibitem [{\citenamefont {{Pfirsch}}(1984)}]{Pfirsch:1984ZNatA}%
  \BibitemOpen
  \bibfield  {author} {\bibinfo {author} {\bibfnamefont {D.}~\bibnamefont
  {{Pfirsch}}},\ }\bibfield  {title} {\enquote {\bibinfo {title} {{New
  Variational Formulation of Maxwell-Vlasov and Guiding Center Theories. Local
  Charge and Energy Conservation Laws}},}\ }\href {\doibase
  10.1515/zna-1984-0102} {\bibfield  {journal} {\bibinfo  {journal}
  {Zeitschrift Naturforschung Teil A}\ }\textbf {\bibinfo {volume} {39}},\
  \bibinfo {pages} {1--8} (\bibinfo {year} {1984})}\BibitemShut {NoStop}%
\bibitem [{\citenamefont {{Pfirsch}}\ and\ \citenamefont
  {{Morrison}}(1985)}]{Pfirsch_Morrison:1985PhRvA}%
  \BibitemOpen
  \bibfield  {author} {\bibinfo {author} {\bibfnamefont {D.}~\bibnamefont
  {{Pfirsch}}}\ and\ \bibinfo {author} {\bibfnamefont {P.~J.}\ \bibnamefont
  {{Morrison}}},\ }\bibfield  {title} {\enquote {\bibinfo {title} {{Local
  conservation laws for the Maxwell-Vlasov and collisionless kinetic
  guiding-center theories}},}\ }\href {\doibase 10.1103/PhysRevA.32.1714}
  {\bibfield  {journal} {\bibinfo  {journal} {Physical Review A}\ }\textbf
  {\bibinfo {volume} {32}},\ \bibinfo {pages} {1714--1721} (\bibinfo {year}
  {1985})}\BibitemShut {NoStop}%
\bibitem [{\citenamefont {{Elvs{\'e}n}}\ and\ \citenamefont
  {{Larsson}}(1993)}]{Elvsen_Larsson:1993PhyS}%
  \BibitemOpen
  \bibfield  {author} {\bibinfo {author} {\bibfnamefont {R.}~\bibnamefont
  {{Elvs{\'e}n}}}\ and\ \bibinfo {author} {\bibfnamefont {J.}~\bibnamefont
  {{Larsson}}},\ }\bibfield  {title} {\enquote {\bibinfo {title} {{An action
  principle for the relativistic Vlasov-Maxwell system}},}\ }\href {\doibase
  10.1088/0031-8949/47/4/017} {\bibfield  {journal} {\bibinfo  {journal}
  {Physica Scripta}\ }\textbf {\bibinfo {volume} {47}},\ \bibinfo {pages}
  {571--575} (\bibinfo {year} {1993})}\BibitemShut {NoStop}%
\bibitem [{\citenamefont {{Larsson}}(1993)}]{Larsson:1993JPlPh}%
  \BibitemOpen
  \bibfield  {author} {\bibinfo {author} {\bibfnamefont {Jonas}\ \bibnamefont
  {{Larsson}}},\ }\bibfield  {title} {\enquote {\bibinfo {title} {{An action
  principle for the Vlasov equation and associated Lie perturbation equations.
  Part 2. The Vlasov-Maxwell system}},}\ }\href {\doibase
  10.1017/S0022377800016974} {\bibfield  {journal} {\bibinfo  {journal}
  {Journal of Plasma Physics}\ }\textbf {\bibinfo {volume} {49}},\ \bibinfo
  {pages} {255--270} (\bibinfo {year} {1993})}\BibitemShut {NoStop}%
\bibitem [{\citenamefont {{Fl{\^a}}}(1994)}]{Fla:1994}%
  \BibitemOpen
  \bibfield  {author} {\bibinfo {author} {\bibfnamefont {Tor}\ \bibnamefont
  {{Fl{\^a}}}},\ }\bibfield  {title} {\enquote {\bibinfo {title} {{Action
  principle and the Hamiltonian formulation for the Maxwell-Vlasov equations on
  a symplectic leaf}},}\ }\href {\doibase 10.1063/1.870569} {\bibfield
  {journal} {\bibinfo  {journal} {Physics of Plasmas}\ }\textbf {\bibinfo
  {volume} {1}},\ \bibinfo {pages} {2409--2418} (\bibinfo {year}
  {1994})}\BibitemShut {NoStop}%
\bibitem [{\citenamefont {{Cendra}}\ \emph {et~al.}(1998)\citenamefont
  {{Cendra}}, \citenamefont {{Holm}}, \citenamefont {{Hoyle}},\ and\
  \citenamefont {{Marsden}}}]{Cendra_et_al:1998}%
  \BibitemOpen
  \bibfield  {author} {\bibinfo {author} {\bibfnamefont {Hern{\'a}n}\
  \bibnamefont {{Cendra}}}, \bibinfo {author} {\bibfnamefont {Darryl~D.}\
  \bibnamefont {{Holm}}}, \bibinfo {author} {\bibfnamefont {Mark J.~W.}\
  \bibnamefont {{Hoyle}}}, \ and\ \bibinfo {author} {\bibfnamefont
  {Jerrold~E.}\ \bibnamefont {{Marsden}}},\ }\bibfield  {title} {\enquote
  {\bibinfo {title} {{The Maxwell-Vlasov equations in Euler-Poincar{\'e}
  form}},}\ }\href {\doibase 10.1063/1.532244} {\bibfield  {journal} {\bibinfo
  {journal} {Journal of Mathematical Physics}\ }\textbf {\bibinfo {volume}
  {39}},\ \bibinfo {pages} {3138--3157} (\bibinfo {year} {1998})}\BibitemShut
  {NoStop}%
\bibitem [{\citenamefont {{Sugama}}(2000)}]{Sugama:2000PhPl}%
  \BibitemOpen
  \bibfield  {author} {\bibinfo {author} {\bibfnamefont {H.}~\bibnamefont
  {{Sugama}}},\ }\bibfield  {title} {\enquote {\bibinfo {title} {{Gyrokinetic
  field theory}},}\ }\href {\doibase 10.1063/1.873832} {\bibfield  {journal}
  {\bibinfo  {journal} {Physics of Plasmas}\ }\textbf {\bibinfo {volume} {7}},\
  \bibinfo {pages} {466--480} (\bibinfo {year} {2000})}\BibitemShut {NoStop}%
\bibitem [{\citenamefont {{Brizard}}(2000{\natexlab{a}})}]{Brizard:2000PRL}%
  \BibitemOpen
  \bibfield  {author} {\bibinfo {author} {\bibfnamefont {Alain~J.}\
  \bibnamefont {{Brizard}}},\ }\bibfield  {title} {\enquote {\bibinfo {title}
  {{New Variational Principle for the Vlasov-Maxwell Equations}},}\ }\href
  {\doibase 10.1103/PhysRevLett.84.5768} {\bibfield  {journal} {\bibinfo
  {journal} {Physical Review Letters}\ }\textbf {\bibinfo {volume} {84}},\
  \bibinfo {pages} {5768--5771} (\bibinfo {year}
  {2000}{\natexlab{a}})}\BibitemShut {NoStop}%
\bibitem [{\citenamefont {{Brizard}}(2000{\natexlab{b}})}]{Brizard:2000PhPl}%
  \BibitemOpen
  \bibfield  {author} {\bibinfo {author} {\bibfnamefont {Alain~J.}\
  \bibnamefont {{Brizard}}},\ }\bibfield  {title} {\enquote {\bibinfo {title}
  {{Variational principle for nonlinear gyrokinetic Vlasov-Maxwell
  equations}},}\ }\href {\doibase 10.1063/1.1322063} {\bibfield  {journal}
  {\bibinfo  {journal} {Physics of Plasmas}\ }\textbf {\bibinfo {volume} {7}},\
  \bibinfo {pages} {4816--4822} (\bibinfo {year}
  {2000}{\natexlab{b}})}\BibitemShut {NoStop}%
\bibitem [{\citenamefont {{Sugama}}\ \emph {et~al.}(2013)\citenamefont
  {{Sugama}}, \citenamefont {{Watanabe}},\ and\ \citenamefont
  {{Nunami}}}]{Sugama-et-al:2013PhPl}%
  \BibitemOpen
  \bibfield  {author} {\bibinfo {author} {\bibfnamefont {H.}~\bibnamefont
  {{Sugama}}}, \bibinfo {author} {\bibfnamefont {T.~H.}\ \bibnamefont
  {{Watanabe}}}, \ and\ \bibinfo {author} {\bibfnamefont {M.}~\bibnamefont
  {{Nunami}}},\ }\bibfield  {title} {\enquote {\bibinfo {title} {{Conservation
  of energy and momentum in nonrelativistic plasmas}},}\ }\href {\doibase
  10.1063/1.4789869} {\bibfield  {journal} {\bibinfo  {journal} {Physics of
  Plasmas}\ }\textbf {\bibinfo {volume} {20}},\ \bibinfo {eid} {024503}
  (\bibinfo {year} {2013})}\BibitemShut {NoStop}%
\bibitem [{\citenamefont {{Sugama}}\ \emph {et~al.}(2015)\citenamefont
  {{Sugama}}, \citenamefont {{Watanabe}},\ and\ \citenamefont
  {{Nunami}}}]{Sugama-et-al:2015PhPl}%
  \BibitemOpen
  \bibfield  {author} {\bibinfo {author} {\bibfnamefont {H.}~\bibnamefont
  {{Sugama}}}, \bibinfo {author} {\bibfnamefont {T.~H.}\ \bibnamefont
  {{Watanabe}}}, \ and\ \bibinfo {author} {\bibfnamefont {M.}~\bibnamefont
  {{Nunami}}},\ }\bibfield  {title} {\enquote {\bibinfo {title} {{Effects of
  collisions on conservation laws in gyrokinetic field theory}},}\ }\href
  {\doibase 10.1063/1.4928378} {\bibfield  {journal} {\bibinfo  {journal}
  {Physics of Plasmas}\ }\textbf {\bibinfo {volume} {22}},\ \bibinfo {eid}
  {082306} (\bibinfo {year} {2015})}\BibitemShut {NoStop}%
\bibitem [{\citenamefont {{Burby}}\ and\ \citenamefont
  {{Sengupta}}(2018)}]{Burby-Sengupta:2018PhPl}%
  \BibitemOpen
  \bibfield  {author} {\bibinfo {author} {\bibfnamefont {J.~W.}\ \bibnamefont
  {{Burby}}}\ and\ \bibinfo {author} {\bibfnamefont {W.}~\bibnamefont
  {{Sengupta}}},\ }\bibfield  {title} {\enquote {\bibinfo {title} {{Hamiltonian
  structure of the guiding center plasma model}},}\ }\href {\doibase
  10.1063/1.5016453} {\bibfield  {journal} {\bibinfo  {journal} {Physics of
  Plasmas}\ }\textbf {\bibinfo {volume} {25}},\ \bibinfo {eid} {020703}
  (\bibinfo {year} {2018})}\BibitemShut {NoStop}%
\bibitem [{\citenamefont {{Sugama}}\ \emph {et~al.}(2018)\citenamefont
  {{Sugama}}, \citenamefont {{Nunami}}, \citenamefont {{Satake}},\ and\
  \citenamefont {{Watanabe}}}]{Sugama_et_al_2018PhPl}%
  \BibitemOpen
  \bibfield  {author} {\bibinfo {author} {\bibfnamefont {H.}~\bibnamefont
  {{Sugama}}}, \bibinfo {author} {\bibfnamefont {M.}~\bibnamefont {{Nunami}}},
  \bibinfo {author} {\bibfnamefont {S.}~\bibnamefont {{Satake}}}, \ and\
  \bibinfo {author} {\bibfnamefont {T.~H.}\ \bibnamefont {{Watanabe}}},\
  }\bibfield  {title} {\enquote {\bibinfo {title} {{Eulerian variational
  formulations and momentum conservation laws for kinetic plasma systems}},}\
  }\href {\doibase 10.1063/1.5031155} {\bibfield  {journal} {\bibinfo
  {journal} {Physics of Plasmas}\ }\textbf {\bibinfo {volume} {25}},\ \bibinfo
  {eid} {102506} (\bibinfo {year} {2018})}\BibitemShut {NoStop}%
\bibitem [{\citenamefont {{Xiao}}\ \emph {et~al.}(2018)\citenamefont {{Xiao}},
  \citenamefont {{Qin}},\ and\ \citenamefont {{Liu}}}]{GEMPIC-review:2018PlST}%
  \BibitemOpen
  \bibfield  {author} {\bibinfo {author} {\bibfnamefont {Jianyuan}\
  \bibnamefont {{Xiao}}}, \bibinfo {author} {\bibfnamefont {Hong}\ \bibnamefont
  {{Qin}}}, \ and\ \bibinfo {author} {\bibfnamefont {Jian}\ \bibnamefont
  {{Liu}}},\ }\bibfield  {title} {\enquote {\bibinfo {title}
  {{Structure-preserving geometric particle-in-cell methods for Vlasov-Maxwell
  systems}},}\ }\href {\doibase 10.1088/2058-6272/aac3d1} {\bibfield  {journal}
  {\bibinfo  {journal} {Plasma Science and Technology}\ }\textbf {\bibinfo
  {volume} {20}},\ \bibinfo {eid} {110501} (\bibinfo {year}
  {2018})}\BibitemShut {NoStop}%
\bibitem [{\citenamefont {Arnold}(1966)}]{arnold-1966}%
  \BibitemOpen
  \bibfield  {author} {\bibinfo {author} {\bibfnamefont {Vladimir}\
  \bibnamefont {Arnold}},\ }\bibfield  {title} {\enquote {\bibinfo {title} {Sur
  la g\'eom\'etrie diff\'erentielle des groupes de lie de dimension infinie et
  ses applications \`a l'hydrodynamique des fluides parfaits},}\ }\href
  {\doibase 10.5802/aif.233} {\bibfield  {journal} {\bibinfo  {journal}
  {Annales de l'Institut Fourier}\ }\textbf {\bibinfo {volume} {16}},\ \bibinfo
  {pages} {319--361} (\bibinfo {year} {1966})}\BibitemShut {NoStop}%
\bibitem [{\citenamefont {Holm}\ \emph {et~al.}(1998)\citenamefont {Holm},
  \citenamefont {Marsden},\ and\ \citenamefont {Ratiu}}]{Holm_1998}%
  \BibitemOpen
  \bibfield  {author} {\bibinfo {author} {\bibfnamefont {D.~D.}\ \bibnamefont
  {Holm}}, \bibinfo {author} {\bibfnamefont {J.~E.}\ \bibnamefont {Marsden}}, \
  and\ \bibinfo {author} {\bibfnamefont {T.~S.}\ \bibnamefont {Ratiu}},\
  }\bibfield  {title} {\enquote {\bibinfo {title} {{The Euler-Poincar{\'e}
  Equations and Semidirect Products with Applications to Continuum
  Theories}},}\ }\href {\doibase https://doi.org/10.1006/aima.1998.1721}
  {\bibfield  {journal} {\bibinfo  {journal} {Adv. Math.}\ }\textbf {\bibinfo
  {volume} {137}},\ \bibinfo {pages} {1--81} (\bibinfo {year}
  {1998})}\BibitemShut {NoStop}%
\end{thebibliography}%
\end{document}